\documentclass[12pt,letterpaper]{article}
\usepackage[utf8]{inputenc}
\usepackage{amsmath}
\usepackage{amsthm}
\usepackage{amssymb}
\usepackage{hyperref}
\usepackage{tabularx}
\usepackage{graphicx}
\usepackage[english]{babel}
\usepackage{color}
\usepackage{graphicx} %economic for inserting images
\usepackage{xcolor}
\definecolor{MatlabBlue}{HTML}{0072BD}   % [0.0000 0.4470 0.7410]
\definecolor{MatlabOrange}{HTML}{D95319} % [0.8500 0.3250 0.0980]
\definecolor{MatlabYellow}{HTML}{EDB120} % [0.9290 0.6940 0.1250]
\definecolor{MatlabPurple}{HTML}{7E2F8E} % [0.4940 0.1840 0.5560]
\definecolor{MatlabGreen}{HTML}{77AC30}  % [0.4660 0.6740 0.1880]
\definecolor{MatlabCyan}{HTML}{4DBEEE}   % [0.3010 0.7450 0.9330]
\definecolor{MatlabRed}{HTML}{A2142F}    % [0.6350 0.0780 0.1840]
\usepackage{amstext}
\usepackage[authoryear]{natbib}
\usepackage{verbatim}
\usepackage{listings}
\usepackage{setspace}
\usepackage{geometry}
\usepackage{caption}
\usepackage{subcaption}    % For subfigures
\usepackage{booktabs}
\usepackage{bbm}
\usepackage{amssymb}
\usepackage{float}
\usepackage[T1]{fontenc}

\newtheorem{assumption}{Assumption}
\newtheorem{proposition}{Proposition}

\newcommand{\CAR}{\mathrm{CAR}}
\newcommand{\CMR}{\mathrm{CMR}}
\newcommand{\E}{\mathbb{E}}
\newcommand{\Var}{\mathrm{Var}}
\newcommand{\Cov}{\mathrm{Cov}}
\newcommand{\Ck}{\mathcal{C}_k}
\newcommand{\indep}{\perp\!\!\!\perp}

\begin{document}

\title{Clustered Local Projections for Time-Varying Models\thanks{%
We thank seminar participants at the 2024 SNDE conference, the 2024 Midwest Econometrics Group Conference  and the Fall 2025 Midwest Macroeconomics Meeting for helpful comments. This paper is based on
research supported by the NSF under Grants No. SES-2417534 and SES-2417535. Refine.ink was used to check the paper for consistency and clarity. The authors have no conflict of interest to declare.
}}

\author{
Ana Mar\'{i}a Herrera\thanks{University of Kentucky, Email: amherrera@uky.edu}, Elena Pesavento\thanks{Emory University,
Email: epesave@emory.edu. - Corresponding author.} , Alessia Scudiero\thanks{Emory University, Email: alessia.scudiero@emory.edu},
}

\date{\today}
\maketitle
\thispagestyle{empty}
\begin{abstract}
We propose a \textit{clustered} local projection (\textit{clustered LP}) method to estimate impulse response functions in a class of time-varying models where parameter variation is linked to a low-dimensional matrix of observables. We show that the clustered LP recovers the conditional average response when the driving variables are exogenous and a weighted average of the conditional marginal effects when they are endogenous. We propose an iterative estimation method that first classifies the data using k-means, estimates impulse response functions via GMM, and evaluates differences across clustered LP estimates. Our Monte Carlo simulations illustrate the ability of \textit{clustered LP} to approximate the conditional average response function. We employ our technique to examine how uncertainty influences the transmission of a contractionary monetary policy shock to the 5- and 10-year U.S. nominal Treasury yields. Our estimation results suggest macroeconomic and monetary policy uncertainty operate through complementary but distinct channels: the former primarily amplifies the risk compensation embedded in the term premium, while the latter governs the speed and persistence with which markets revise their expectations about the future rate path following a monetary policy shock. \\

JEL Classification: C32, E17, E42, E52, E60, E63.
\\
Keywords: Time-varying parameter VAR, time-varying local projections, clustered local projections, monetary policy transmission, uncertainty.
\end{abstract}

\doublespacing
\newpage \setcounter{page}{1}\parskip0.5em 

\maketitle

\section{Introduction}

Models with time-varying parameters are commonly used by applied macroeconomists and policymakers to study how shocks affect economic aggregates. Their appeal stems from the realization that many macroeconomic time series employed in policy analysis exhibit some form of nonlinearity. For instance, some researchers posit that inflation disagreement tends to increase during periods of high inflation, which, in turn, renders monetary policy less effective (see, e.g., \citet{dong2024inflation}). Others suggest that changes in the size or direction of shocks hitting the economy may lead to variation in the transmission of such shocks to aggregate economic activity (see, e.g., \citet{barnichon2022understanding}). 

A key aspect that differentiates alternative time-varying models is the law of motion that governs how the parameters evolve over time. In state-dependent models, some regime-switching models, and threshold autoregressive models, the link between parameters and external economic variables is stated explicitly, but the parameters are restricted to take values in a finite set of regimes. Other approaches, such as traditional time-varying vector autoregression models (TVP-VARs) and time-varying local projections (TVP-LP), allow the parameters to vary freely over time according to a parametric law of motion. However, they leave the connection to economic conditions unspecified. In practice, many empirical studies assume a specific law of motion for the parameters in the estimation step and, later, interpret the estimated parameter path in connection with the evolution of specific economic conditions. For example, \citet{Inoue2024} suggests that their estimated government spending multipliers are correlated with the level of public debt, while alternative regimes in Markov-Switching (MS) models are often interpreted in terms of economic expansions and recessions.

In this paper, we propose a new estimator that can be used to study impulse response functions in a broad class of time-varying parameter models, \textit{clustered local projections (clustered LP)}. In our framework, time variation is linked to a low-dimensional matrix of observables that the researcher deems relevant in explaining the evolution of the parameters via a (possibly) nonlinear function. A novelty of our framework is the use of an iterative estimator, where the data are first classified into clusters (periods) using \textit{kmeans}, the impulse responses for each cluster are estimated via local projections --thus allowing for heterogeneity over time--, and an evaluation step assesses whether the responses are statistically different between clusters.  A crucial ingredient of the clustered LP estimator is the use of a machine learning method to group together time periods when observables (e.g., economic policy uncertainty, unemployment) are similar. This contrasts with the typical literature on state-dependent models, where it is common to determine a-priori the number of states and the cutoff by grouping the data into samples below and above the mean of one variable of interest (e.g., average unemployment in \citet{Ramey2018}). 

Our paper contributes to the literature on local projections in several aspects. From a theoretical perspective, we build on the work of \citet{Goncalves2024a} and extend the idea of the state-dependent local projections to a version with many low-dimensional states. We show that the clustered LP recovers the conditional average response ($CAR$) when the variables that drive the time-variation are exogenous and a weighted average of the conditional marginal effects ($CMR$) when they are endogenous. In the latter case, impulse response estimands from the clustered LP have a causal interpretation  as in \citet{kolesar2025dynamiccausaleffectsnonlinear}. That is, the clustered LP estimands provide a scalar causal summary of the nonlinear effect in each cluster. 

From a methodological point of view, the proposed clustered LP combines a popular machine learning algorithm (k-means) with the local projection method proposed by \citet{Jorda2005}; hence, the number and partition of the states are not defined prior to the estimation. Instead, the data-driven iterative procedure consists of three steps. First, a classification step where the k-means algorithm is used to classify the data into groups (clusters) according to the similarity in the driving variables. Second, an estimation step where the local projections for all clusters and horizons are estimated jointly via a system of Generalized Method of Moments (GMM). A third step evaluates whether the impulse response functions are different across all pairs of clusters using a Wald test. We iterate on these three steps until we reject the null of equality for all clusters for a given horizon of interest.

Using \textit{clustered LP}, we inquire how effective monetary policy shocks are in altering the medium- and long-term U.S. nominal yields during uncertain times. We build on the work of \citet{DePooter2021}, \citet{Tillman2020} and \citet{Bauer2021} who find that monetary policy is less effective during periods of high monetary policy uncertainty (MPU) and \citep{aastveit2017economic} who focus on periods of high/low macroeconomic uncertainty. We deviate from their work by allowing time variation to depend on the evolution of both uncertainty measures and by letting our data-driven method determine the grouping. The iterative procedure classifies the data into four groups (low macro uncertainty-low MPU, low macro uncertainty-moderate MPU, moderate macro uncertainty-high MPU and high macro uncertainty-low MPU), which reflects the fact that the two uncertainty measures do not move in synchrony. The results suggest that on impact and in the very short-run, macroeconomic uncertainty is the main driver of heterogeneity in the response of the 5- and 10-year yields to monetary policy shocks. Nevertheless, on longer horizons, the MPU drives yield responsiveness. In fact, twelve months after the monetary policy shock, the response of the yields is an order of magnitude lower during times of high MPU. Focusing on uncertain times, we find evidence suggesting that the expectations component plays a key role in accounting for heterogeneity during times of high MPU, whereas the response of the term premium is a key determinant during periods of high macro uncertainty. All in all, our results illustrate how clustered LP responses provide a more multifaceted characterization of the interaction between monetary policy and uncertainty than state-dependent models.

\textbf{Related Literature.} \textit{Clustered LP} are most closely related to two strands of literature that estimate impulse responses in unstable environments via local projections. The first strand links the existence of parameter instability to an observable variable such as the state of the economy \citep{Ramey2018}, the direction of fiscal intervention \citet{barnichon2022understanding}, or the degree of uncertainty or disagreement about inflation expectations (\citet{Klepacz2021} and \citet{Falck2021}). An advantage of using these state-dependent models is that the researcher can explicitly investigate whether an observable macroeconomic variable of interest drives the evolution of the parameters. However, a disadvantage is that they restrict the form of time variation. 

The second strand assumes that the evolution of the parameters is independent of the observables and occurs in a deterministic fashion (\citet{Inoue2024} and \citet{MaungYoshida2025}) and the estimation is performed through local projections. Assuming a particular law of motion has several advantages: it provides a very flexible way to capture different forms of nonlinearity, it simplifies the estimation process by limiting the number of parameters to be estimated in a highly nonlinear model, and, by design, it ensures that the parameters vary gradually over time. Nevertheless, such an assumption has a limitation: the cost of remaining agnostic about the underlying source of the time variation is that the researcher may not directly learn about the economic reasons behind the changes in parameters.

Our framework is also related to a broad and expanding TVP–VAR literature that builds on \citet{Primiceri2005} and \citet{cogley2005drifts}. However, our framework differs in three important respects. First, instead of modeling parameter evolution as independent random walks, we allow the parameters to vary as nonparametric functions of observables ($Z_t$), thus explicitly linking parameter dynamics to economic conditions. Second, while TVP–VAR models assume a specific stochastic law of motion, we do not make assumptions about the functional form that links $Z_t$ with the evolution of model parameters. Third, our frequentist approach relies on LP for estimation, whereas TVP-VAR models are estimated using Bayesian methods. Similarly, our framework shares some features with the work of \citet{fischer2023}, \citet{chan2020} and \citet{hauzenberger2024} where the parameters' law of motion is linked to an observable predictor or a latent shifter. Yet, these papers also use Bayesian estimation methods. 

Finally, our empirical framework bears some resemblance to Markov Switching (MS) models (e.g., \citet{hamilton1989new} and \citet{diebold1994regime}) where the parameters are assumed to switch over time according to some (possibly unobserved) variable, $Z_t$. However, in contrast to MS models where the number of states is determined a priori, the classification step of our algorithm uses k-means to select the number of states and partition the data accordingly.  Our model is similar, although conceptually different, to a functional-coefficient autoregressive model (see, e.g. \citet{chen1993functional}). Unlike the standard functional-coefficient we allow the parameters to be driven by an external variable rather than lagged dependent variable. More importantly, we focus on local projections regressions as we are interested in estimating impulse responses.

\textbf{Outline.} The remainder of the paper proceeds as follows. Section \ref{sec:Estimation} outlines the \textit{clustered LP} framework and describes the estimator. Section \ref{sec:simulations} illustrates the performance of the estimator via Monte Carlo simulations for alternative DGPs; Section \ref{sec:application} uses \textit{clustered LP} to inquire about the effect of monetary policy shocks on the 5- and 10-year yields during uncertain times. Section \ref{sec:conclusions} concludes with a brief summary of the paper and possible future steps.

%XXXXXXXXXXXXXXXXXXXXXXXXXXXXXXXXXXXXXXX 
\section{Clustered local projections}\label{sec:Estimation}

In this section, we propose a fast and easy-to-implement method, \textit{clustered LP}, that leverages the ease of implementation of LP and k-means to estimate time-varying impulse response functions when time variation is driven by an observable variable of interest. Our framework relies on a first step where heterogeneity over time is revealed, allowing us to classify the data into groups, and then the responses to impulse response functions are estimated via LP. 

\subsection{Theoretical Framework}
Our baseline specification is a time-varying local projection where the evolution of the parameters is driven by a variable (or low-dimensional set of variables), $Z_{t-1}$, given by:

\begin{equation}\label{eq:lp_np}
y_{t+h}=\beta_t^h(Z_{t-1}) \varepsilon_t+\gamma_{t}^{h}(Z_{t-1})^{\prime} \mathbf{w_t}+v_{t+h}
\end{equation}

\noindent where $t=1, \ldots, T$; $y_{t+h}$ is a scalar endogenous variable of interest $h$ periods ahead, $\varepsilon_t$ is the structural shock of interest, $\beta_t^h$ denotes the time-varying impulse response at horizon $h$, $\mathbf{w_t}$ is a vector of control variables including deterministic terms and lags of endogenous variables, and $v_{t+h}$ is the local projection residual. The notation $ \beta_t^h(Z_{t-1})$, and $\gamma_{t}^{h}(Z_{t-1})$ makes explicit that the model parameters vary as unknown functions of the external variables $Z_t$ at time $t-1$. The timing of the variables $Z_{t-1}$ follows the literature on state dependent models by assuming that the value of $Z$ in the previous time period affects the current value of the parameters. 

We make the following two assumptions:\footnote{The use of more primitive conditions is outside the scope of this paper; for example, \citet{chen1993functional} gives conditions for geometric ergodicity for a univariate version of our model when $Z_t=Y_{t-p}$.}  

\begin{assumption}[Identification]\label{as:1}
$\varepsilon_{t}$ is an observed structural shock such that for each $t$:
\begin{enumerate}
\item[(a)] $\varepsilon_t$ is independent of $\mathcal{F}_{t-1} = \sigma(y_{t-j}, \mathbf{w_t}, Z_{t-j}, \varepsilon_{t-j} : j \geq 1)$ and of $(\varepsilon_{t+1}, \ldots, \varepsilon_{t+H})$;
\item[(b)] $\varepsilon_t$ is continuously distributed on an interval $I \subseteq \mathbb{R}$ with mean zero and positive, finite variance.
\end{enumerate}
\end{assumption}

\begin{assumption}[Estimation]\label{as:2}
$y_t$  and $Z_t$ are strictly stationary and ergodic. 
\end{assumption}

Assumption~\ref{as:1} states the conditions required for the identification result in Proposition~\ref{prop:CLP}. It requires full independence of $\varepsilon_t$ from $\mathcal{F}_{t-1}$, not merely mean independence; this rules out GARCH-type conditional heteroskedasticity but does not restrict the unconditional variance to be constant. Assumption~\ref{as:2} is not required for identification but ensures that the sample OLS estimator converges to the population quantity $\beta_k^h$. Under Assumptions~\ref{as:1} and~\ref{as:2}, the distribution of $\varepsilon_t$ is time-invariant (by stationarity of the system), so $\varepsilon_t$ is in fact i.i.d.\ and constant variance holds as a consequence rather than as a separate restriction.

For ease of exposition, in the remainder of the paper, we assume that the shock $\varepsilon_t$ has been appropriately identified and we focus on the response of the variable $y_t$ at time $t+h$ to a shock of size $\delta$ in $\varepsilon_{t}$. Nevertheless, the \textit{clustered LP} approach we propose is flexible and can easily accommodate alternative identification strategies, such as instrumental variables or externally identified shocks, as in our empirical application in Section \ref{sec:application}.  

In our theoretical framework, we consider two DGPs. In the first,  we focus on  the case where $Z_{t-1}$ is a low-dimensional set of exogenous driving variables.  In the second, we allow for endogenous $Z_{t-1}$. In both cases, we first classify the  observations into \(K\) `groups/clusters' where the driving variable, $Z_{t-1}$, takes on similar values. Note that the timing of $Z_{t-1}$, which is crucial for our model, implies that it is predetermined with respect to the shock at time \(t\). 

Suppose that the data has been classified into $K$ clusters using k-means clustering. Then, the \textit{clustered LP} for the variable $i$ at horizon $h$ is given by
\begin{equation}\label{eq:CLP}
  y_{t+h} = \sum_{k=1}^{K} D_k\, \beta_{k}^h\,\varepsilon_{t}+  \sum_{k=1}^{K} D_k\, \gamma_{k}^{h}\, \mathbf{w_t}
              + v_{t+h},
\end{equation}
where $D_k = \mathbf{1}\{Z_{t-1}\in\Ck\}$ is the indicator for the cluster~$k$ (with $\Ck$ the $k$-th k-means cluster on the support of $Z_{t-1}$), $\mathbf{w_t}$ collects all controls, and the $i$ subindex for variable $i$ is omitted for simplicity. 

Following \citet{Goncalves2024a},  we start by defining the causal objects of interest in our time-varying setting. Let $y_{t+h}(e)$ denote the potential outcome obtained by setting $\varepsilon_{t}=e$. More precisely, the outcome admits the nonparametric structural representation
\begin{equation}\label{eq:structural}
  y_{t+h} \;=\; \psi_h(\varepsilon_t,\, U_{h,t+h}),
\end{equation}
where $\psi_h(\cdot,\cdot)$ is an unknown measurable function and $U_{h,t+h}$ collects all variables other than $\varepsilon_t$ that
causally affect $y_{t+h}$: initial conditions ($y_{t-1}$ and lags), the
path of the driving variable ($Z_{t-1}, Z_t, \ldots, Z_{t+h-1}$), and all
other disturbances entering the outcome between periods $t$ and $t+h$. The \emph{Conditional Average Structural Function} for cluster~$k$ is given by
\begin{equation}
  \Psi_k^h(e) \;\equiv\;
  \E\!\left[y_{t+h}\mid \varepsilon_{t}=e,\; Z_{t-1}\in\Ck\right],
\end{equation}

We define two causal objects. The \emph{Conditional Average Response} for cluster~$k$ to a shock of
fixed size~$\delta$ is given by
\begin{equation}
  \CAR^h(\delta,k) \;\equiv\;
  \E\!\left[y_{t+h}(\varepsilon_{t}+\delta)-y_{t+h}(\varepsilon_{t})
        \,\Big|\, Z_{t-1}\in\Ck\right],
\end{equation}

and the \emph{Conditional Marginal Response} is given by $CMR^h(k)=\frac{C A R^h(\delta,k)}{\delta}=CAR^h(1,k).$

% \begin{equation}\label{eq:CMR}
% CMR^h(k)=\frac{C A R^h(\delta,k)}{\delta}=CAR^h(1,k).
% \end{equation}

In the context of a state dependent model, \citet{Goncalves2024a} shows that the state-dependent LP estimates the $CAR$ when the state is exogenous. Thus, when $Z_{t-1}$ is exogenous, the \textit{clustered LPs} recover the $CAR$. More precisely, for $h=0$ the \textit{clustered LP} estimates the impact effect of the shock $\varepsilon_{1t}$ conditional on the value of $Z_{t-1}$ for that particular cluster. For longer horizons, $\hat{\beta}_{k}^h$ estimates the average response over all possible future paths of the state between time $t$ and $t+h$. When  $Z_{t-1}$ is endogenous, the estimates still have a casual interpretation. In particular, in the state-dependent case, \citet{Goncalves2024a} shows that under normal errors, linear LPs recover the $CMR$. Similarly, the \textit{clustered LP} recovers the $CMR$. When normality fails, \citet{kolesar2025dynamiccausaleffectsnonlinear} demonstrate that linear LPs admit a causal interpretation even when the underlying model is nonlinear. This is also the case here.

The following proposition formally extends these results to our \textit{clustered LP}.

\begin{proposition}\label{prop:CLP}
  Consider  the clustered local projection
  \eqref{eq:CLP}. Fix any cluster $k\in\{1,\ldots,K\}$ formed by k-means
  clustering of $Z_{t-1}$.

  \medskip
  \noindent\textbf{Part~(i) - General DGP.}
  Suppose Assumption~\ref{as:1} holds and the following
  regularity conditions are satisfied for each cluster~$k$:
  \begin{enumerate}
    \item[(a)] $g_{h,k}(e)\equiv \E[y_{t+h}\mid\varepsilon_{t}=e,
      D_k=1]$ is locally absolutely continuous in $e$ on $I$;
    \item[(b)] $\E[|g_{h,k}(\varepsilon_{t})|(1+|\varepsilon_{t}|)\mid D_k=1]
      <\infty$ and $\int_I \omega_k(e)|g_{h,k}'(e)|\,de<\infty$, where
      $\omega_k$ is defined below.
  \end{enumerate}
  Then  for each cluster~$k$ and
  all horizons $h\geq 0$:
  \begin{equation}\label{eq:part1}
    \beta_k^h \;=\; \int_I \omega_k(e)\,\Psi_{k}^{h\prime}(e)\,de,
  \end{equation}
  where $\omega_k(e) \;\equiv\;
    \frac{\Cov\!\left(\mathbf{1}\{\varepsilon_{t}\geq e\},\,\varepsilon_{t}
          \,\Big|\, D_k=1\right)}
         {\Var(\varepsilon_{t}\mid D_k=1)}.$   The weight function $\omega_k$ is: (i)~nonneg\-a\-tive and integrates to
  one; (ii)~hump-shaped, peaking near $\E[\varepsilon_{t}\mid D_k=1]$; and
  (iii)~determined solely by the conditional distribution of $\varepsilon_{t}$
  given cluster~$k$, not by the outcome variable or horizon~$h$.

  \medskip
  \noindent\textbf{Part~(ii) - Exogenous $Z$.}
  Suppose, in addition to Assumption~\ref{as:1}, that $\left\{Z_s: s \geq t\right\} \perp \varepsilon_t$, then for each cluster~$k$ and
  all horizons $h\geq 0$:
  \begin{equation}\label{eq:part2}
    \beta_k^h \;=\; \frac{\CAR^h(\delta,k)}{\delta}
              \;=\; \CAR^h(1,k),
  \end{equation}
  which equals $\CMR^h(k)$, the conditional marginal response for
  cluster~$k$.
\end{proposition}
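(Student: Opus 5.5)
The plan is to reduce the clustered LP to $K$ separate within-cluster regressions and then apply, cluster by cluster, the integration-by-parts (Yitzhaki-type) representation of a linear regression slope used by \citet{kolesar2025dynamiccausaleffectsnonlinear}. Because the regressors in \eqref{eq:CLP} are fully interacted with the cluster indicators $D_k$, the population normal equations decouple across clusters. Hence $\beta_k^h$ is the population coefficient on $\varepsilon_t$ in the projection of $y_{t+h}$ on $(\varepsilon_t,\mathbf{w_t})$ within the subpopulation $\{D_k=1\}$. By Frisch--Waugh--Lovell it equals $\Cov(\tilde\varepsilon_t, y_{t+h}\mid D_k=1)/\Var(\tilde\varepsilon_t\mid D_k=1)$, where $\tilde\varepsilon_t$ is the residual from projecting $\varepsilon_t$ on $\mathbf{w_t}$ within cluster $k$.

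The first key step is to show that the controls drop out. Since $D_k$ is a function of $Z_{t-1}$, and $\mathbf{w_t}$ and $Z_{t-1}$ are $\mathcal{F}_{t-1}$-measurable, Assumption~\ref{as:1}(a) gives that $\varepsilon_t$ is independent of $(\mathbf{w_t},D_k)$. Therefore $\E[\varepsilon_t\mathbf{w_t}\mid D_k=1]=\E[\varepsilon_t]\E[\mathbf{w_t}\mid D_k=1]=0$, so $\tilde\varepsilon_t=\varepsilon_t$. This yields
\[
\beta_k^h=\frac{\Cov(\varepsilon_t,y_{t+h}\mid D_k=1)}{\Var(\varepsilon_t\mid D_k=1)}=\frac{\Cov(\varepsilon_t,g_{h,k}(\varepsilon_t)\mid D_k=1)}{\Var(\varepsilon_t\mid D_k=1)},
\]
using iterated expectations. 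Condition (b) guarantees that these moments exist, and Assumption~\ref{as:2} is not needed at this stage.

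The second step, which I expect to be the main technical obstacle, is the integration-by-parts argument. By local absolute continuity (a), write $g_{h,k}(\varepsilon)=g_{h,k}(e_0)+\int_I \big(\mathbf{1}\{\varepsilon\ge e\}-\mathbf{1}\{e_0\ge e\}\big)g_{h,k}'(e)\,de$. Substitute this into the covariance and exchange covariance and integral by Fubini. The exchange is justified by the integrability conditions in (b): $\E[|g|(1+|\varepsilon|)]<\infty$ and $\int\omega_k|g'|<\infty$, together with $\omega_k\ge0$. This gives $\beta_k^h=\int_I\omega_k(e)g_{h,k}'(e)\,de$. Finally, $g_{h,k}=\Psi_k^h$ by definition, which gives \eqref{eq:part1}.

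The stated properties of $\omega_k$ follow from the same computation. Applying the formula to $g(e)=e$ shows that $\omega_k$ integrates to one. Nonnegativity holds because $\Cov(\mathbf{1}\{\varepsilon\ge e\},\varepsilon)=P(\varepsilon\ge e)\big(\E[\varepsilon\mid\varepsilon\ge e]-\E\varepsilon\big)\ge0$. The weight vanishes at both ends of $I$ and peaks at $e=\E[\varepsilon_t\mid D_k=1]$, since its derivative in $e$ is $-f(e)(e-\E\varepsilon)$. It depends only on the law of $\varepsilon_t$ given $D_k=1$, which by independence is the unconditional law.

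For Part~(ii), I would use the structural representation \eqref{eq:structural}. Under the added exogeneity of $\{Z_s:s\ge t\}$, together with Assumption~\ref{as:1}(a) for the other components of $U_{h,t+h}$ (initial conditions and future shocks), $\varepsilon_t$ is independent of $(U_{h,t+h},Z_{t-1})$. Then $\Psi_k^h(e)=\E[\psi_h(e,U)\mid D_k=1]$, so $\Psi_k^h(e+\delta)-\Psi_k^h(e)$ averages the potential-outcome difference. In the class of models \eqref{eq:lp_np}, $\psi_h$ is linear in $\varepsilon_t$ with coefficient depending on $U$, so $\Psi_k^{h\prime}(e)$ is constant and equal to $\CAR^h(1,k)$. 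Because $\omega_k$ integrates to one, \eqref{eq:part1} collapses to $\beta_k^h=\CAR^h(1,k)=\CAR^h(\delta,k)/\delta$. The delicate point in this part is making precise that exogeneity of the whole future $Z$-path, not just $Z_{t-1}$, is what removes the feedback from $\varepsilon_t$ into the slope through $Z_t,\dots,Z_{t+h-1}$.
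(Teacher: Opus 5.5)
Your proposal is correct and follows essentially the same route as the paper: reduce the fully interacted regression to a within-cluster projection via FWL (you additionally check that the controls drop out, which the paper only asserts), apply the integration-by-parts representation of Proposition~1 of \citet{kolesar2025dynamiccausaleffectsnonlinear} (which you reprove rather than cite), and for Part~(ii) use linearity of $\psi_h$ in $e$ under exogenous $Z$ together with $\int_I\omega_k(e)\,de=1$. One point in your favor is that you take $\Psi_k^h=g_{h,k}$ directly from its definition, so Part~(i) does not need the paper's claim that $\varepsilon_t\indep U_{h,t+h}\mid D_k=1$; that claim cannot hold in general when $Z$ is endogenous, because $U_{h,t+h}$ contains $Z_t,\ldots,Z_{t+h-1}$, and you rely on this independence only in Part~(ii), where exogeneity of the $Z$ path makes it plausible.
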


 Proposition~\ref{prop:CLP} is a population-level result that holds for any partition of $Z_{t-1}$ into $K$ clusters. The data-generating process need not have discrete regimes: when $\beta^h(Z_{t-1})$ varies continuously, the clustered LP provides a piecewise-constant approximation and Proposition~\ref{prop:CLP} ensures that each piece retains a causal interpretation. The proof of Proposition~\ref{prop:CLP} can be found in Appendix \ref{sec:appendix}; the formal approximation properties of this estimator as $K$ grows with $T$ are left for future work. 

Some remarks are in order here. First, we follow the usual timing convention of state dependent models: $D_k = \mathbf{1}\{Z_{t-1}\in\Ck\}$ is a function of $Z_{t-1}$, which is predetermined with respect to $\varepsilon_{t}$. Under Assumption~\ref{as:1} (a), $\varepsilon_{t}$ is independent of $\mathcal{F}_{t-1}$ and hence independent of the past, so
  $\varepsilon_{t}\indep D_k$ for all $k$.  This makes the within-clustered LP algebraically equivalent to a standard LP estimated on the subsample (i.e., cluster) defined by $Z_{t-1}$. Second, we do not make any assumptions regarding the functional form of the true LP within the cluster. However, we note that when $Z_{t-1}$ is exogenous, the evolution of $Z$ is independent of the structural shock $\varepsilon_t$, so that future state variables do not respond to $\varepsilon_t$.  In this case the potential outcome is linear in $e$ and a linear regression within each cluster recovers $CAR$. When $Z_{t-1}$ is endogenous, Part (i) applies Proposition 1 of
  \citet{kolesar2025dynamiccausaleffectsnonlinear} to the within-cluster subsample so the estimator admits a causal interpretation. Given the timing of the model, the clustered LP will recover the $CAR$ only at impact. For $h>0$, the OLS estimand can be expressed as a weighted average of marginal effects $\Psi_{k}^{h \prime}(e)$ with weights $\omega_k(e)$ that can be estimated from the data. 
  When the weights are positive, the estimand inherits the sign of the $CMR$ within the cluster whenever $\Psi_{k}^{h \prime}(e)$ has a constant sign.\footnote{See the discussion of \citet{kolesar2025dynamiccausaleffectsnonlinear} and the replies to the discussants} Lastly, a researcher interested in recovering $CAR$ could employ the nonparametric approach of \citet{Goncalves2024b}.

\subsection{Estimator}\label{sec:GMM}
 The \textit{clustered LP} estimator consists of three iterative steps: classification, estimation, and evaluation. Note that the estimation of $\beta_k^h$ requires, in addition to the identification conditions in Assumption~\ref{as:1}, that the data are strictly stationary and ergodic (Assumption~\ref{as:2}) so that the sample OLS/GMM estimator converges to the population quantity characterized in Proposition~\ref{prop:CLP}.

\textbf{Classification step:} We first partition the whole sample into groups using k-means. More specifically, we start with a maximum number of clusters, $K$, and use the k-means algorithm to classify $Z_{t-1}$ into groups (clusters) based on their similarity.\footnote{For more details on Lloyd's algorithm, see Lloyd, Stuart P. “Least Squares Quantization in PCM.” IEEE Transactions on Information Theory, vol. 28, 1982, pp. 129–137.}

\textbf{Estimation step:} Then, building on \citet{inouejordakuersteiner2024}, we estimate the local projection for all clusters and horizons jointly via  a system Generalized  Method of Moments (GMM). Specifically, we rewrite equation (\ref{eq:CLP}) in matrix form as a system of $H+1$ equations:
\begin{equation}\label{eq:GMM}
\begin{aligned}
\boldsymbol{y}_t(H)
&=\mathbf{C(H)}\mathbf{x}_t+ \boldsymbol{v}_t(H)
\end{aligned}
\end{equation} 
where $\mathbf{D}_t':=\left[ D_{1t} \ldots D_{Kt} \right]$, $\mathbf{w_t}$ is an $m \times 1$ vector that contains control variables,
\begin{align}
\mathbf{C}(H) = \begin{bmatrix}
\mathbf{B}(H) & \mathbf{\Gamma}(H)
\end{bmatrix},
\quad
\mathbf{x}_t = \begin{bmatrix}
\mathbf{D}_t \otimes \varepsilon_{t} \\
\mathbf{D}_t \otimes \mathbf{w_t}
\end{bmatrix},
\end{align}
$\boldsymbol{y}_t(H)$, and $\boldsymbol{v}_t(H)$ are vectors with variables stacked over the horizons $h=0,\ldots,H$, such that $\boldsymbol{y}_t(H) = (y_t, y_{t+1}, \dots, y_{t+H})^\top \in \mathbb{R}^{(H+1) \times 1}$
and $\boldsymbol{v}_t(H) = (v_t, v_{t+1}, \dots, v_{t+H})^\top \in \mathbb{R}^{(H+1) \times 1}$. $\mathbf{B(H)}$ and $\mathbf{\Gamma(H)}$ are matrices that collect the parameters stacked accordingly, and the first $K \times (H+1)$ parameters, $\beta_k^h$, correspond to the impulse response estimates for each of the $K$ clusters. The population moment conditions of the system of clustered LP can be written as
$$\mathbb{E}\left[\mathbf{x}_t \cdot\left(\boldsymbol{y}_t(H)-\mathbf{C(H)}\mathbf{x_t}\right)^{\prime}\right]=0.$$
Defining the moment function as $m_t(\theta)=\mathbf{x}_t \otimes\left(\boldsymbol{y}_t(H)-\mathbf{C(H)}\mathbf{x_t}\right),$
allows us to express the GMM estimator as
\begin{equation}\label{eq:GMMest}
\hat{\theta}=\arg \min _\theta\left[\frac{1}{N} \sum_{t=t_0}^{T^*}  m_t(\theta)\right]^{\top} \hat{V}^{-1}\left[\frac{1}{N} \sum_{t=t_0}^{T^*}  m_t(\theta)\right]
\end{equation}
where $N=T^*-t_0$, $t_0$ denotes the first observation available after accounting for lags in the control set, $T^*=T-H-1$, and $\hat{V}$ denotes the optimal weighting matrix, correcting for heteroskedasticity and autocorrelation. Joint estimation of the IRFs using this GMM setup is useful, as it provides an estimate of the covariance matrix, $\hat{\Omega}_\beta$, a key ingredient in the construction of a joint test on the IRFs to determine the optimal number of clusters in the next step.  

\textbf{Evaluation step:} After estimating the IRFs for the $K$ clusters, we evaluate whether they differ across clusters via a series of pairwise Wald tests.\footnote{See \citet{kilianvigfusson2011} for a similar pairwise test in a linear setup.} Specifically, given a horizon of interest to the analyst, ($\tilde H$), which may differ from $H$, let the null hypothesis of equality between the responses at horizon $h=0,1,\dots,\tilde H$ for clusters $k$ and $k'$ be given by
\begin{equation*}
   H_0: \beta_k^h - \beta_{k'}^h=0 \text{ for }h=0,1,2,...,\tilde H.
\end{equation*}

Let $\beta= [\begin{array}{llllll} \beta_k^0 &\cdots & \beta_k^H & \beta_{k'}^0 &\cdots & \beta_{k'}^{\tilde{H}}]'\end{array}$, then the Wald statistic can be expressed as
\[
W \;=\; (R\hat\beta)^\prime \bigl(R\,\hat \Omega_\beta\,R^\prime\bigr)^{-1} (R\hat\beta )
\;\sim\; \chi^2_{\tilde{H}+1} \quad \text{under } H_0.
\]

To control for multiple testing across all $\tfrac{K(K-1)}{2}$ possible cluster pairs, we apply a Bonferroni correction by adjusting the significance level to $\alpha_{\text{adj}} = \alpha / [K(K-1)/2]$. Each pairwise comparison is evaluated against the corresponding critical value $\chi^2_{r,1-\alpha_{\text{adj}}}$. If we fail to reject the null for a pair, then we repeat the procedure with $K-1$ clusters. We iterate until we reject the null for all pairs.
At each iteration of the procedure, the Bonferroni correction ensures that the probability of any false rejection among the pairwise Wald tests is at most $\alpha$. Since the procedure performs at most $K - 1$ iterations, the probability that any false rejection occurs is at most $(K - 1)\alpha$. In practice, this bound is conservative as the procedure typically stops before exhausting all $K - 1$ stages.

It is important to note that the iterative procedure does not aim to recover a ``true" number of clusters. In general, the data-generating process need not have discrete regimes as the IRF $\beta_t^h(Z_{t-1})$ may vary continuously with the driving variable. In this case, the clustered LP provides a piecewise-constant approximation, and the selected $\hat{K}$ reflects the number of groups whose IRFs are statistically distinguishable at the given sample size and significance level. Two clusters with similar, but not identical, impulse responses will be merged if the difference is not detectable. Conversely, as the sample size grows and estimation precision improves, finer differences become detectable and $\hat{K}$ may increase.

\section{Illustrative Simulations}\label{sec:simulations}

This section presents simulation results that illustrate the performance of the \textit{clustered LP}. We focus on the case where $Z_t$ is exogenous, thus our objects of interest are the number of states selected by the Wald test and the $CAR^h(\delta,k)$. 

We consider a simplified bi-variate model given by:
\begin{equation}\label{eq:sim}
\begin{cases}
x_t = \varepsilon_t^x, \\
y_t = \beta_t\,(Z_{t-1})x_t + \gamma_{1,t}(Z_{t-1})\, y_{t-1} + \gamma_{2,t}(Z_{t-1})\, y_{t-2}
        + \varepsilon_t^y, 
\end{cases}
\end{equation}
where we assume the shock of interest, $\varepsilon_t^x$, is observed; the intercepts have been normalized to zero; the innovations, $(\varepsilon_t^x, \varepsilon_t^y)^\top \sim \mathcal{N}(\mathbf{0}, V)$
with $V = I_2$, and parameter evolution is driven by a low-dimensional set of exogenous variables $Z_{t-1}$. 

We present simulations where time-variation is driven by three different DGPs. We employ  $M=10000$ Monte Carlo replications and set $T=2000$. As is common in time-varying models, we are interested in the IRFs conditional on the time when the shock hits and the initial conditions given by $Z_{t-1}$. Therefore, to simulate the object of interest, we proceed as follows.

\textbf{Step 1: Baseline time series for $y^{(m)}_t$.} For each Monte Carlo replication ($m=1,2,\dots,M$), we generate a time series $\{y^{(m)}_t\}_{t=1}^{T}$ from the initial conditions, $(\beta^{(m)}_1, \gamma^{(m)}_{1,1}, \gamma^{(m)}_{2,1})$ using the DGP for $Z_{t}$ and the model (\ref{eq:sim}). We discard $10{,}000$ observations to ensure a stationary distribution.

\textbf{Step 2: Computation of the $\mathbf{CAR}$.} The object of interest is the response of $y_{t+h}$, for $h=0,1,..,H$, to a shock of size $\delta=1$ hitting $\varepsilon^x_t$ at time $t$ where, to mimic the object of interest in time-varying models, $t$ is any point in the sample. Thus,  for every $t$,  we simulate two paths for the outcome variable: a counterfactual
baseline path where $x_t=\varepsilon_t^x$ and a perturbed path where $x_t=\delta+\varepsilon_t^x$. Then, for each group $k$ we compute
\begin{equation*}
    CAR^h(\delta,k)=\frac{1}{M}\sum_{m=1}^{M}\!\left[y^{(m)}_{t+h}(\varepsilon^x_{t}+\delta)-y^{(m)}_{t+h}(\varepsilon^x_{t})
        \,\Big|\, z^{(m)}_{t-1}\in\Ck\right].
\end{equation*}
\textbf{Step 3: Clustered LP estimation.} The clustered LP estimates are obtained using the iterative algorithm described in section \ref{sec:GMM}. The initial number of groups is set to $K = 10$, the number of horizons for the Wald test is set to $\tilde H=5$, and the significance level is $\alpha=0.05$.

\subsection{Smooth Transition Threshold Models}

We consider two smooth transition threshold models.

\textbf{Univariate Case:} Let the DGP for the time-varying parameters be given by
\begin{equation*}
\boldsymbol{\psi}_t(z_{t-1}) = \sum_{k=1}^{K} \xi_k(z_{t-1})\,\boldsymbol{\psi}_k + \boldsymbol{\eta}_t
\end{equation*}
where $\boldsymbol{\psi}_t = (\beta_t,\, \gamma_{1,t},\, \gamma_{2,t})'$ 
and $\boldsymbol{\eta}_t = (\eta_{1,t},\, \eta_{2,t},\, \eta_{3,t})'$, the number of regimes is $K=4$, $\xi_k(z_{t-1})$ are weights associated with each regime, and $\eta_{i,t}$, $i=1,2,3$ are uncorrelated i.i.d. normal errors with variance set to $0.0009$.\footnote{A small variance is required for the model to be stationary.} To mimic time dependence -- often found in macro and financial data -- the threshold variable $z_t$ is drawn from a stationary
$\mathrm{ARMA}(p_z, m_z)$ process:
\[
  z_t = c_z + \sum_{i=1}^{p_z} \phi_i^z\, z_{t-i}
        + \varepsilon_t^z + \sum_{j=1}^{m_z} \theta_j^z\, \varepsilon_{t-j}^z,
  \qquad
  \varepsilon_t^z \sim \mathcal{N}(0, \sigma_e^2).
\]
The calibration uses $p_z = 2$, $m_z = 3$, $c_z = 0$, AR parameters
$(\phi_1^z, \phi_2^z) = (0.6, 0.3)$, MA parameters
$(\theta_1^z, \theta_2^z, \theta_3^z) = (0.8, 0.7, 0.4)$,
and $\sigma_e = 1$.

Four latent regimes are defined by the empirical quartiles of $z_t$,
giving three thresholds $(\tau_1, \tau_2, \tau_3)$ at the 25th, 50th,
and 75th percentiles.
To ensure stationarity, the regime-specific parameters are set to
\[\begin{bmatrix}
\beta_1 & \beta_2 &  \beta_3 & \beta_4 \\
\gamma_{1,1} & \gamma_{1,2} & \gamma_{1,3} & \gamma_{1,4} \\
\gamma_{2,1} & \gamma_{2,2} & \gamma_{2,3} & \gamma_{2,4} \\
\end{bmatrix}
=
\begin{bmatrix}
-1.9 & -0.5 & 0.2 & 0.8 \\
0.7 & 0.4 & 0.9 & 1.2 \\
0.1 & 0.2 & -0.1 & -0.3 \\
\end{bmatrix}
\]
and the change between regimes is governed by the transition logistic function
\begin{equation}
\label{eq:logistic}
G_k(z_{t-1};\lambda) \;=\; \frac{1}{1+\exp\!\big(-\lambda\,[\,z_{t-1} - c_k\,]\big)}\,,
\qquad \lambda>0,
\end{equation}
where setting $\lambda=5$ ensures a smooth transition between non absorbing regimes, $(c_1,c_2,c_3)=(-4.3359, -0.5981 ,3.5717)$. The regime weights are constructed sequentially as $\xi_1(z_{t-1}) = 1 - G_1(z_{t-1};\lambda)$,
$\xi_k(z_{t-1}) = G_{k-1}(z_{t-1};\lambda) - G_k(z_{t-1};\lambda)$, for $k=2,3$,and
$\xi_4(z_{t-1}) = G_3(z_{t-1};\lambda)$, so that $\xi_k \in [0,1]$ and
$\sum_{k=1}^{4}\xi_k(z_{t-1})=1$ for all $z_{t-1}$.

\textbf{Bivariate Case:} The second DGP we consider assumes that $Z_{t-1}$ is bivariate and generated by a stationary VARMA(2,3). As in the univariate case, we assume that there are four regimes and that the transition across regimes for each of the variables in $Z_t$ is a logistic transition function. We use of a single threshold for each driving variable to generate four states (see the Online Appendix B for details).

\begin{table}[htbp]
\captionsetup{position=top}\caption{Frequency of Clusters from Iterative Procedure}
\label{tab:k_counts}
\centering
\begin{tabular}{c c c c}
\hline
& Univariate & Bivariate & Absolute \\
$\hat{K}$ & Threshold & Threshold & Value\\
\hline
2 & 0\% & 0\% & 13.51\% \\
3 & 0\% & 12.9\%  & 73.14\%\\
4 & 91.4\% & 86.2\% & 12.8\% \\
5 & 7.3\% & 0.8\% & 0.55\%\\
6 & 1.1\% & 0.07\% & 0\%\\
7 & 0.2\% & 0.04\% & 0\%\\
\hline
\end{tabular}
\par\smallskip
\flushleft\footnotesize{\textit{Notes:} This table reports the frequency of the number of clusters, $\hat{K}$, estimated with the iterative procedure across $10{,}000$ Monte Carlo replications for the univariate and bivariate threshold DGPs, as well as for the absolute value DGP. } 
\end{table}

Table \ref{tab:k_counts} reports the frequency of $\hat{K}$ selected. The procedure selects four clusters 91.4\% (86.2\%) of the time in the univariate (bivariate) model. When the a number of clusters selected differs from the true  ($K=4$), the procedure selects a larger number of clusters ($\hat{K}=5$ for 7.3\% of the simulations) for the univariate DGP, but favors a parsimonious model for the bivariate DGP ($\hat{K}=3$ for 12.9\% of the simulations). Recall that the iterative procedure involves a classification step that is based on $Z_{t-1}$ and an evaluation step that tests differences in the IRFs up to an horizon of interest for the researcher. Therefore, the number of clusters selected can differ from the number of clusters in $Z_{t-1}$. We will return to this point in section \ref{sec:small}.

The purpose of the clustered LP is to reduce the dimensionality of the estimation problem so as to facilitate computation of the conditional average response in general time-varying setups while providing a causal summary of the time-varying effects. Figure \ref{fig:uniSTM} illustrates the IRF (gray) for each value of $z_{t-1}$ and the partition of the IRFs when the number of groups is set to the true $K=4$ or the next most likely number of groups ($K={5,6}$) selected by the iterative procedure. The figure also plots the $CAR^h(1,k)\in C_k$ (red), and the average response estimated by the clustered LP (purple) for each group. Two insights are derived from this figure. First, as expected, the iterative procedure divides the data into different subsamples based on $Z_{t-1}$ and the differences in the IRFs across groups up to $\tilde H$. Due to the stationary nature of the DGP, the largest differences are observed on impact. Second, the clustered LP accurately recover the average conditional response for each group. Note how the clustered LP estimates are indistinguishable from the $CAR^h(1,k)\in C_k$.

\begin{figure}[htbp]
    \centering
    \captionsetup{position=top}
    \caption{Univariate Threshold Model}
    \label{fig:uniSTM}
    \begin{subfigure}[t]{0.33\textwidth}
        \centering
        \includegraphics[width=\textwidth]{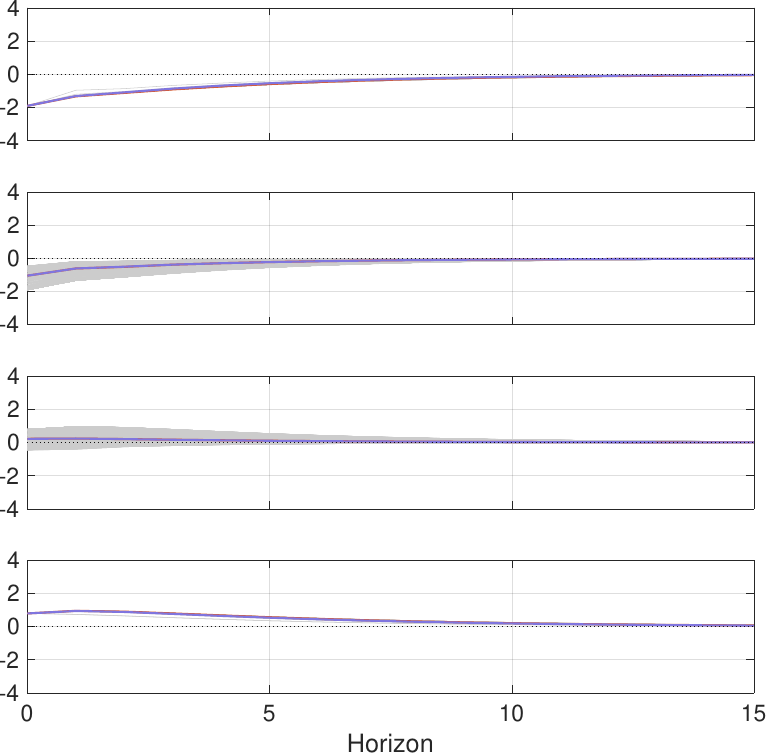}
        \label{fig:univ_smooth_3k}
    \end{subfigure}%
    \hfill%
    \begin{subfigure}[t]{0.33\textwidth}
        \centering
        \includegraphics[width=\textwidth]{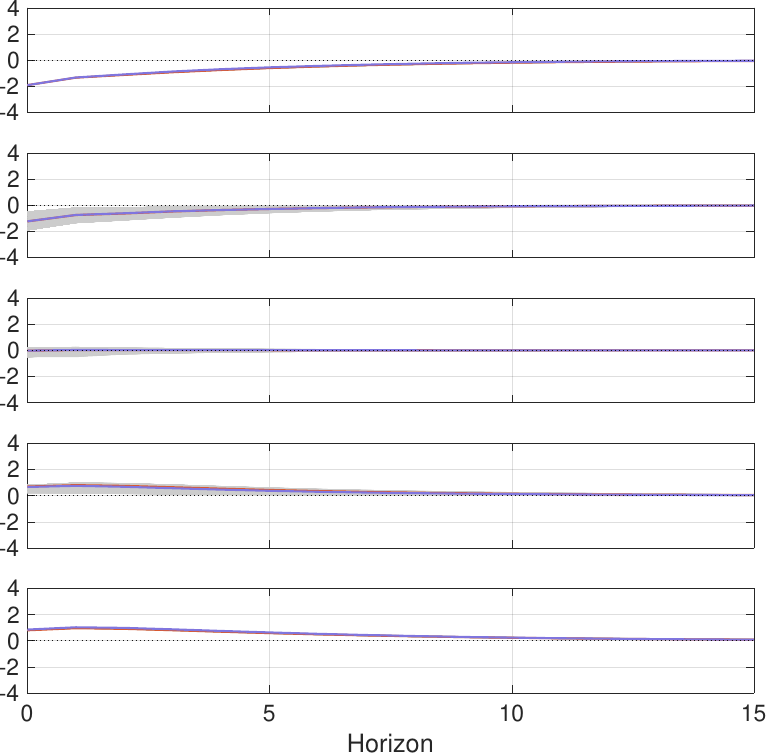}
        \label{fig:univ_smooth_4k}
    \end{subfigure}%
    \hfill%
    \begin{subfigure}[t]{0.33\textwidth}
        \centering
        \includegraphics[width=\textwidth]{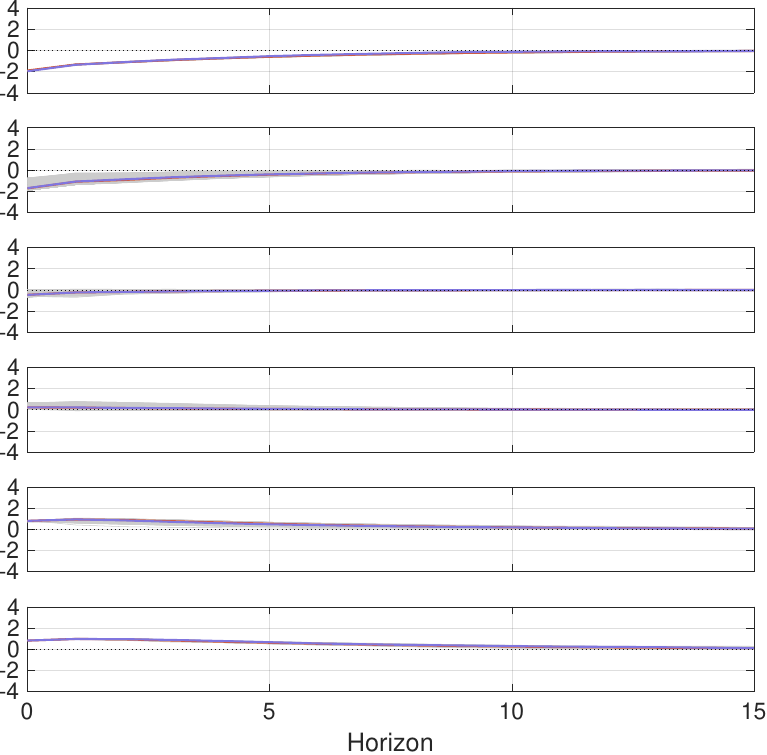}
        \label{fig:univ_smooth_5k}
    \end{subfigure}%
    \vspace{6pt}
    \begin{minipage}{\textwidth}
        \flushleft\footnotesize{\textit{Notes:} This figure illustrates the true impulse response functions for each $Z_{t-1}$ in gray, the clustered LP estimate in (purple), and the $CAR^h(\delta=1,k)$ (in red) for each partition ($K=4$ in the left panel, $K=5$ in the middle panel, and $K=6$ in the right panel).}
    \end{minipage}
\end{figure}

To illustrate a key advantage of our approach, the ability to have a low-dimensional multivariate driving force $Z_{t-1}$, Figure \ref{fig:biSTM} plots the simulation results for the bivariate smooth threshold model. The figure shows the IRF (gray) for each value of $z_{t-1}$ and the partition of the IRFs when the number of groups is set to the true $K=4$, the $CAR^h(1,k)\in C_k$ (red), and the average response estimated by the clustered LP (purple) for each group. The figure also reports simulation results for the next most likely number of groups ($K={3,5}$) selected by the iterative procedure. 

\begin{figure}[htbp]
    \centering
    \captionsetup{position=top}
    \caption{Bivariate Threshold Model}
    \label{fig:biSTM}
    \begin{subfigure}[t]{0.32\textwidth}
        \centering
        \includegraphics[width=\textwidth]{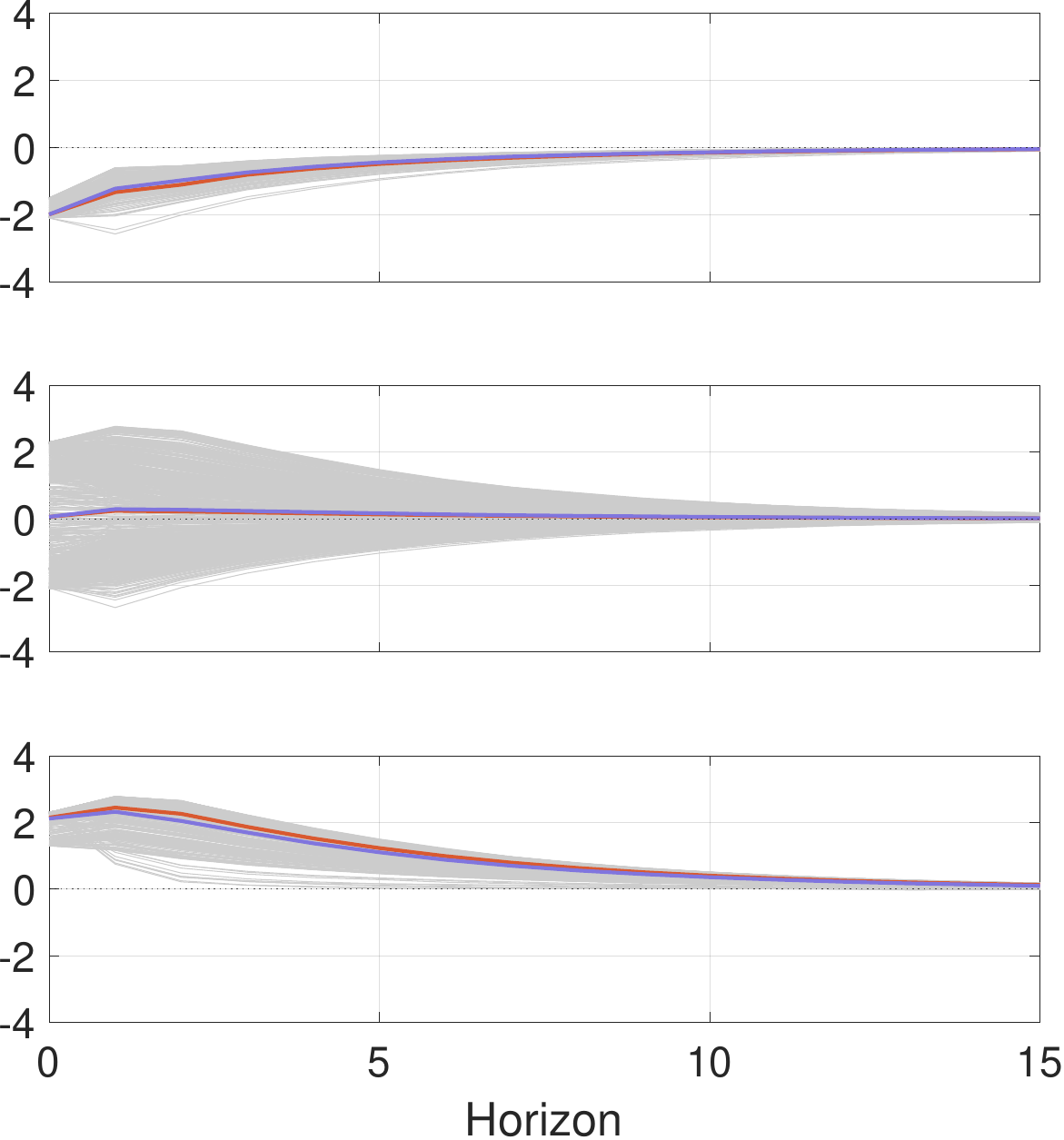}
        \label{fig:biv_smooth_3k}
    \end{subfigure}%
    \hfill%
    \begin{subfigure}[t]{0.32\textwidth}
        \centering
        \includegraphics[width=\textwidth]{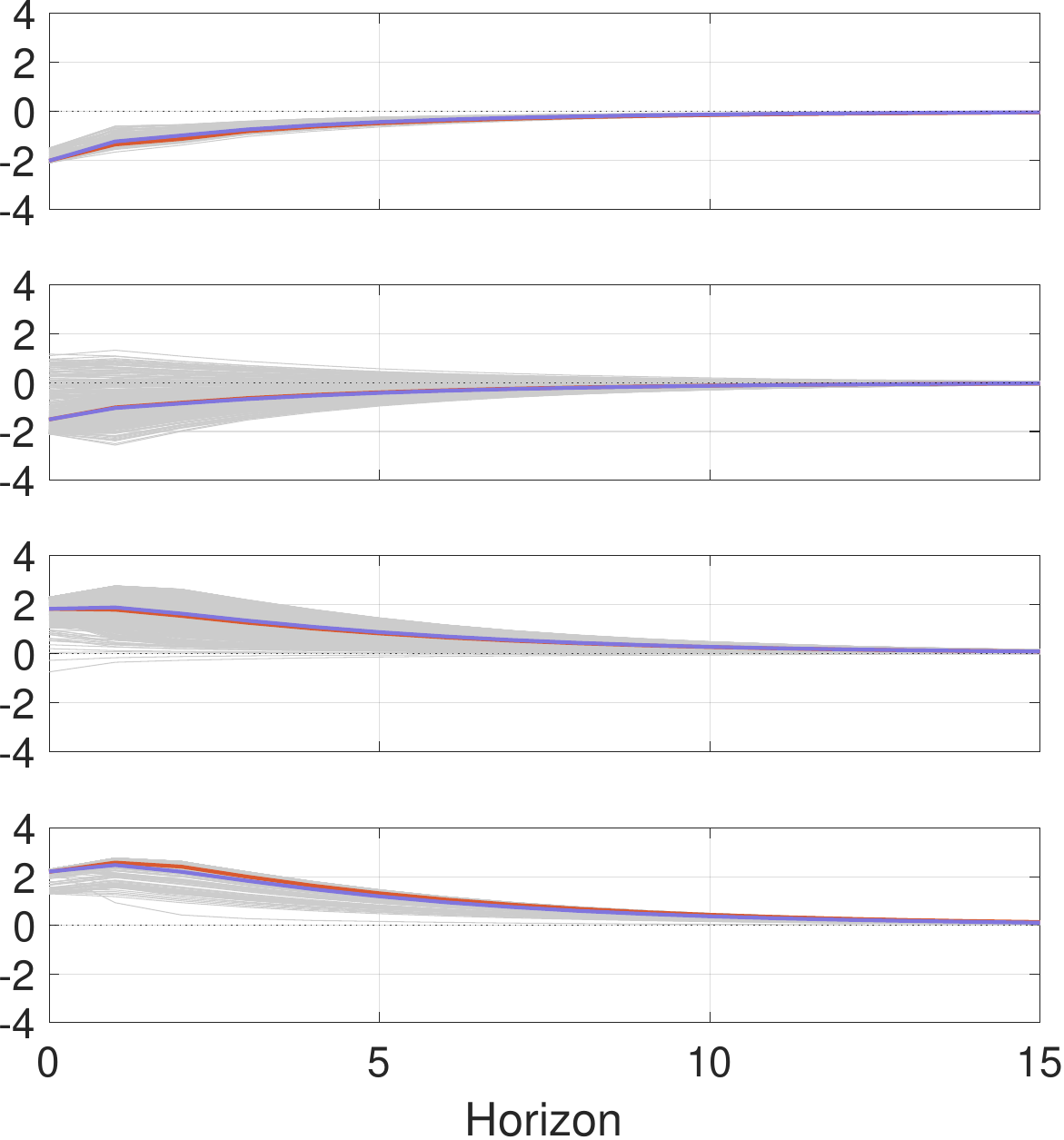}
        \label{fig:biv_smooth_4k}
    \end{subfigure}%
    \hfill%
    \begin{subfigure}[t]{0.32\textwidth}
        \centering
        \includegraphics[width=\textwidth]{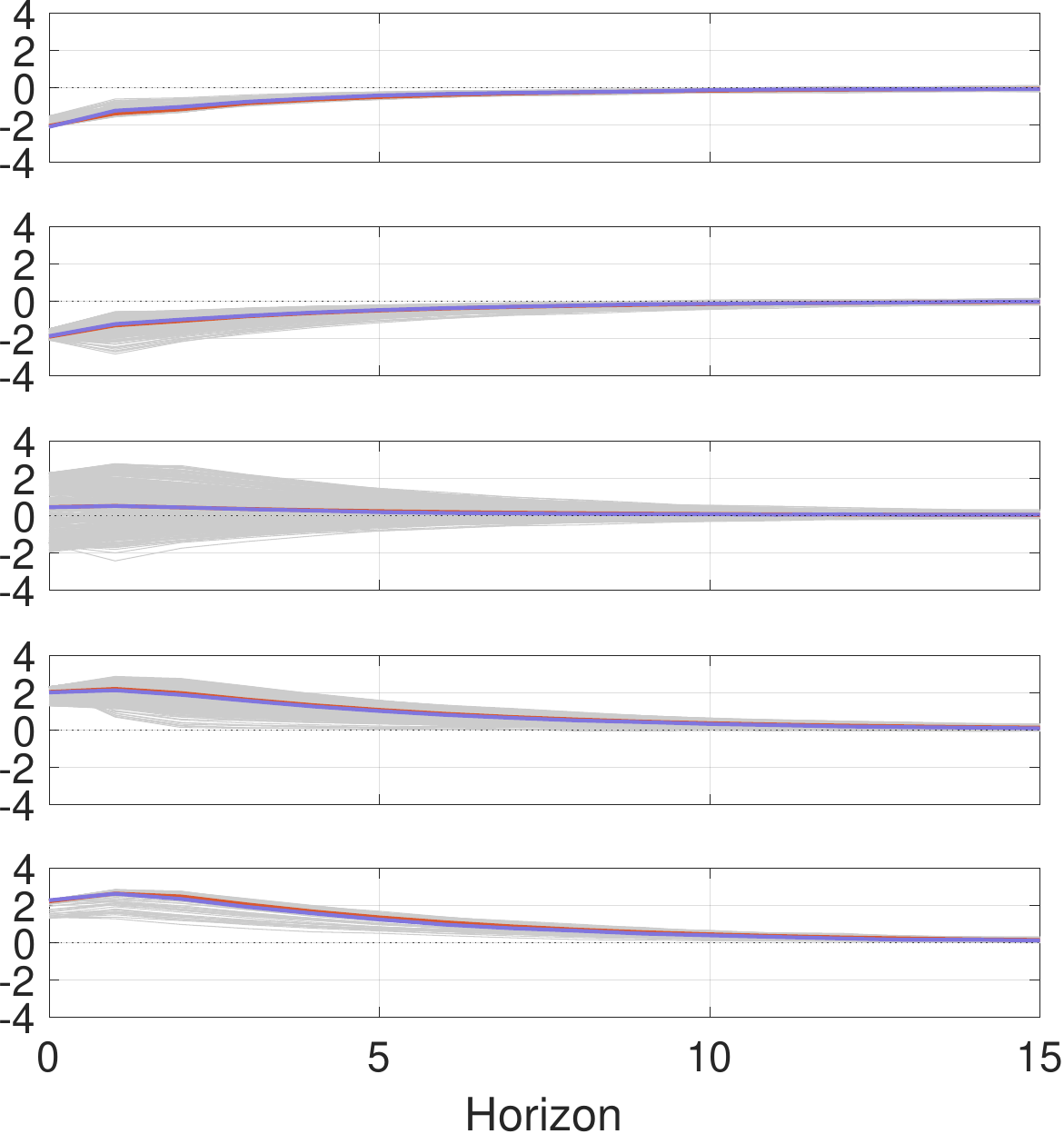}
        \label{fig:biv_smooth_5k}
    \end{subfigure}%
    \vspace{6pt}
    \begin{minipage}{\textwidth}
        \flushleft\footnotesize{\textit{Notes:} This figure illustrates the true impulse response functions for each $Z_{t-1}$ in gray, the clustered LP estimate in (purple), and the $CAR^h(\delta=1,k)$ (in red) for each partition ($K=3$ in the left panel, $K=4$ in the middle panel, and $K=5$ in the right panel).}
    \end{minipage}
\end{figure}

Figure \ref{fig:biSTM} illustrates how the iterative algorithm partitions the data into different subsamples, correspondingly, into different groups of IRFs for the bivariate DGP. The grouping of the IRFs (gray) appears to be closely linked to the impact responses. In addition, the overlap of the clustered LP and the $CAR^h(1,k)$ for each $k\in C_k$ evidences how the proposed estimator recovers the $CAR$. Lastly, in simulations with a smaller $\hat{K}$ (see left panel where $\hat{K}=3$), the estimator groups together very different dynamic IRFs, which results in clustered LP estimates that are close to zero in the short-run. In contrast, in cases where the number of groups is greater (see the right panel where $\hat{K}=5$), the estimation produces groups similar CARs.

\subsection{Alternative Nonlinear transformation of \texorpdfstring{$Z_{t-1}$}{Z(t-1)}}

Consider now the case where the autoregressive parameters evolve in a smooth and continuous manner, but in an asymmetric fashion. The DGP is given by:
\begin{align}
\beta_t(z_{t-1})   &= -0.4 + 0.7 |z_{t-1}| + \eta_{\beta,t}, \label{eq:absval_beta}\\
\gamma_{1,t}(z_{t-1}) &= -0.2 + 0.5 |z_{t-1}| + \eta_{g1,t}, \label{eq:absval_g1} \\
\gamma_{2,t}(z_{t-1}) &= -0.1 + 0.2 |z_{t-1}| + \eta_{g2,t}, \label{eq:absval_g2}
\end{align}
where $(\eta_{\beta,t},\;\eta_{g1,t},\;\eta_{g2,t})\sim \mathcal{N}(\mu\mathbf{1},\sigma_{\text{p}}^2\mathbf{I})$, are serially uncorrelated with  $\mu=0$, and $\sigma_{\text{p}}^2=0.0009$. As in the univariate smooth transition model, $\{z_t\}$ is assumed to follow an $\mathrm{ARMA}(p_z,m_z)$ with AR parameters
$(\phi_1^z, \phi_2^z) = (0.6, 0.3)$, MA parameters
$(\theta_1^z, \theta_2^z, \theta_3^z) = (0.8, 0.7, 0.2)$, $\mu_z=1$
and $\sigma_z = 0.003$. Although this DGP does not feature discrete regime switching, it serves to illustrate how alternative forms of time-variation can still be meaningfully grouped into a few clusters.  

The third column of Table~\ref{tab:k_counts} reports the selection frequencies for this DGP where the nonlinear transformation of the driving variable is $|z_{t-1}|$. Note that in this case, where time-variation evolves smoothly and there is no true number of clusters, the iterative procedure favors a small number of clusters: the classification frequency is 13.51\% for $K=2$, 73.14\% for $K=3$, and 12.8\% for $K=4$. The selection frequency for $K>4$ is negligible.

\begin{figure}[htbp]
    \centering
    \captionsetup{position=top}
    \caption{Absolute Value Model}
    \label{fig:abs}
    \begin{subfigure}[t]{0.32\textwidth}
        \centering
        \includegraphics[width=\textwidth, totalheight=7cm, keepaspectratio]{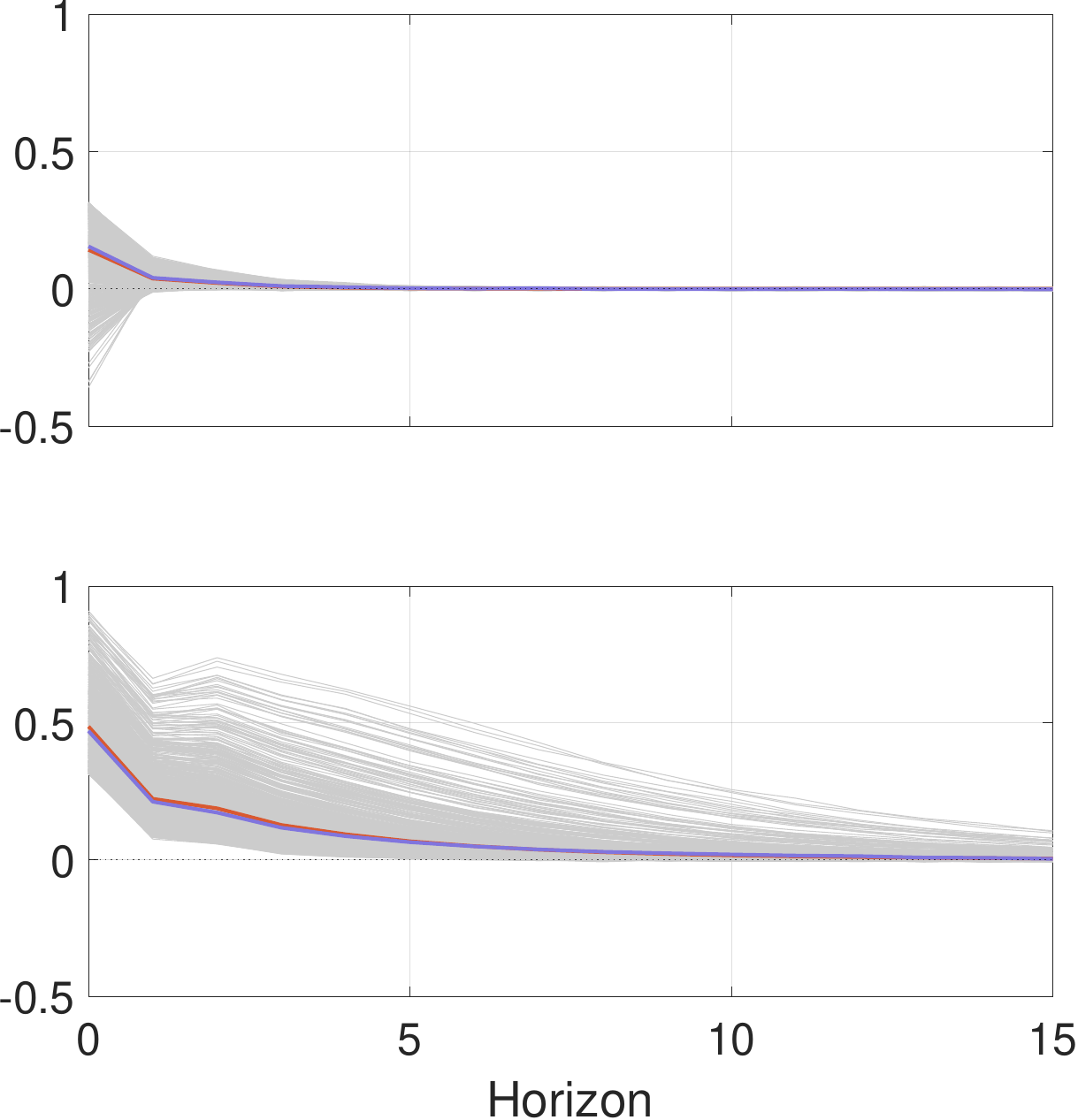}
        \label{fig:abs_2k}
    \end{subfigure}%
    \hfill%
    \begin{subfigure}[t]{0.32\textwidth}
        \centering
        \includegraphics[width=\textwidth, totalheight=7cm, keepaspectratio]{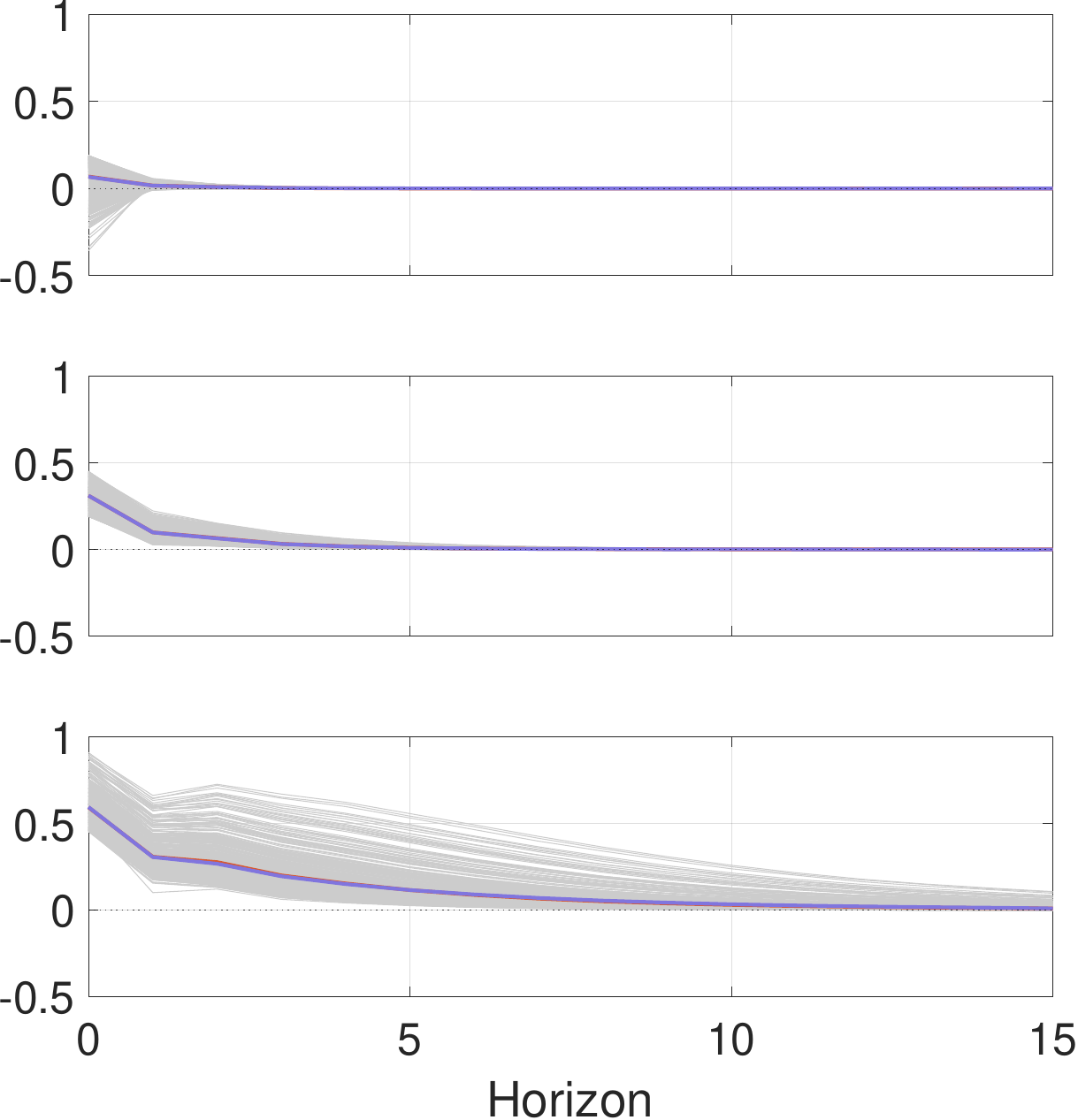}
        \label{fig:abs_3k}
    \end{subfigure}%
    \hfill%
    \begin{subfigure}[t]{0.32\textwidth}
        \centering
        \includegraphics[width=\textwidth, totalheight=7cm, keepaspectratio]{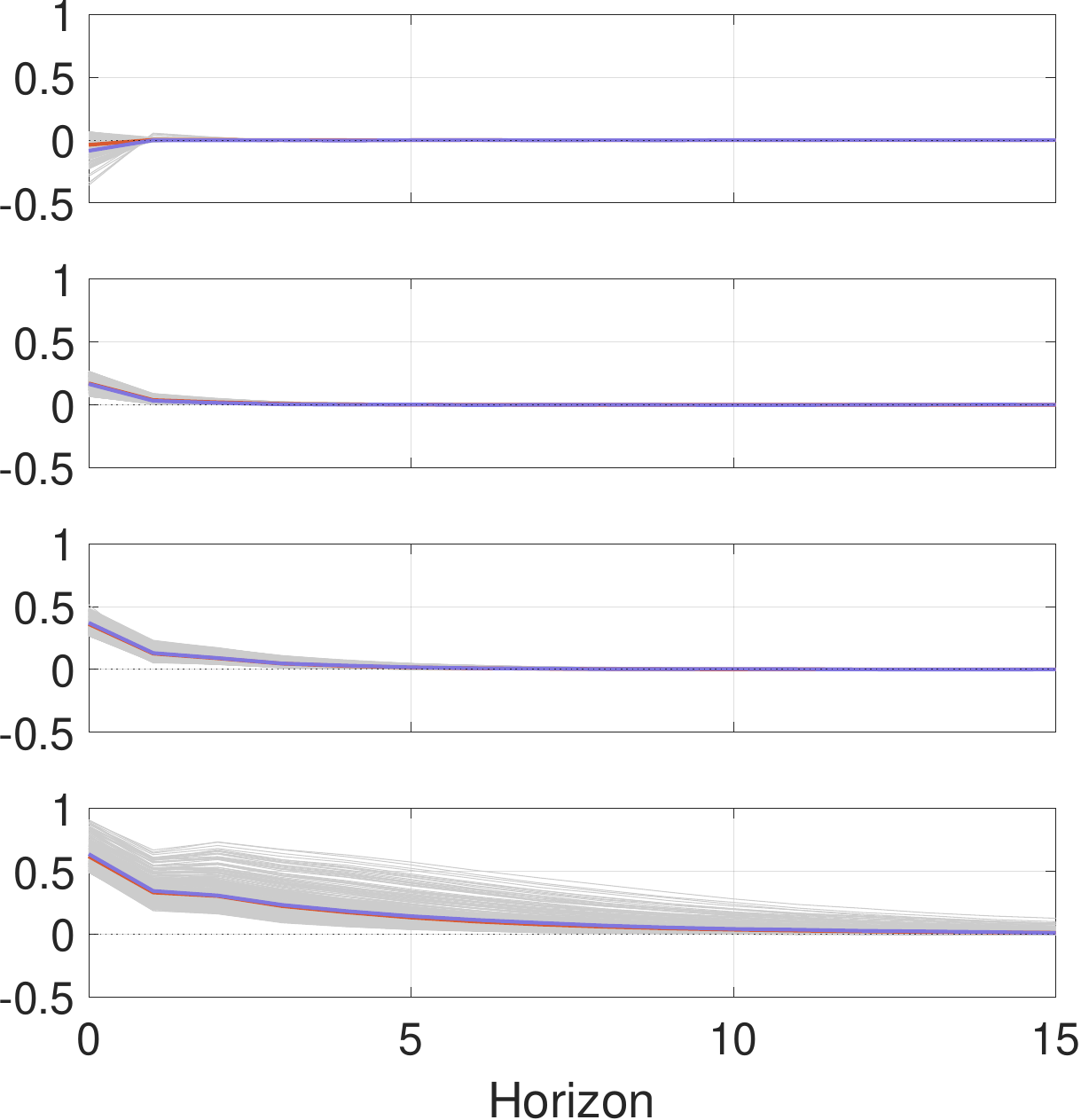}
        \label{fig:abs_4k}
    \end{subfigure}%
    \vspace{6pt}
    \begin{minipage}{\textwidth}
        \flushleft\footnotesize{\textit{Notes:} This figure illustrates the true impulse response functions for each $Z_{t-1}$ in gray, the clustered LP estimate in (purple), and the $CAR^h(\delta=1,k)$ (in red) for each partition ($K=2$ in the left panel, $K=3$ in the middle panel, and $K=4$ in the right panel).}
    \end{minipage}
\end{figure}

Figure~\ref{fig:abs} displays simulation results for this alternative DGP . The response corresponding to each value of $z_{t-1}$ is shown in gray, the $CAR^h(1,k)$ for each $k \in  C_k$ is colored red, and the clustered LP estimates are colored purple. The key takeaways from these simulations echo the insights derived from the smooth threshold models. The clustered LP estimates provide a good approximation of the conditional average responses. Due to the stationarity of the DGP, the effect of the shock to $\varepsilon^x_t$ dies quickly; therefore, differences between responses across groups are concentrated on impact and on the short term. In summary, the three DGPs illustrate how the clustered LP provides a causal summary of the time-varying effect of interest.

\subsection{Identifying Heterogeneity in Small Samples}\label{sec:small}

Our iterative algorithm partition has two components, the initial grouping of the data via k-means on $Z_{t-1}$ and the evaluation step that compares IRFs between groups, leading to the selection of responses that are statistically different given a horizon of interest $\tilde H$. Figure \ref{fig:histo} reports the frequency of $\hat{K}$ for each of the DGPs, setting $\tilde H={0,5}$ in the evaluation step, and samples of size $T={500,1000,200}$.

\begin{figure}[htbp]
    \centering
    \caption{Frequency of $\hat{K}$ across DGPs for different sample sizes}
    \includegraphics[width=0.9\linewidth]{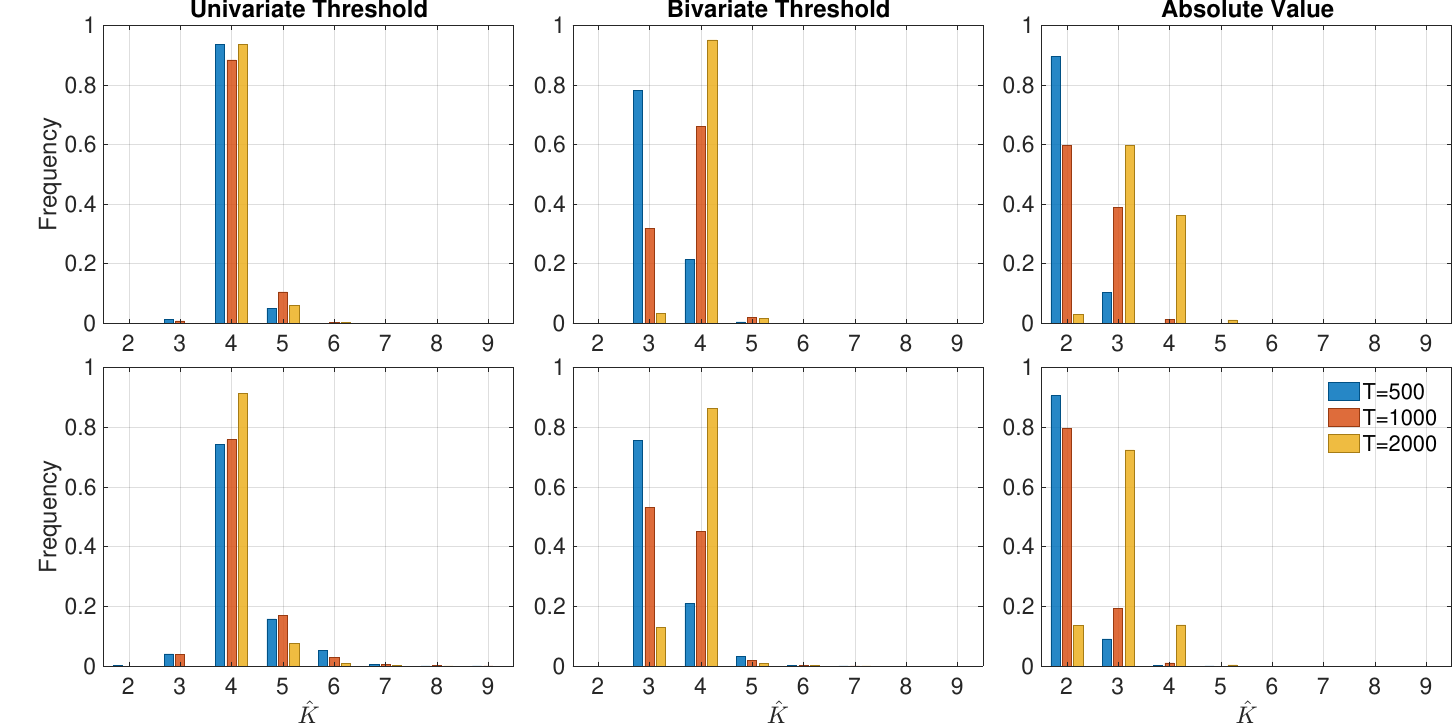}
     \label{fig:histo}
\begin{minipage}{\textwidth}
        \flushleft\footnotesize{\textit{Notes:} This figure illustrates the frequency of the estimated $\hat{K}$ across Monte Carlo simulations for different sample sizes and two values of $H$ in the evaluation step.}
    \end{minipage}    
\end{figure}

For the threshold DGPs, when $\tilde H=0$ the frequency of $\hat{K}=4$ is  higher than for $\tilde H=5$, especially for smaller sample sizes. In other words, if the evaluation step (i.e., the Wald test) is based only on the impact response, then the differences in the IRFs are mainly due to the distinct initial $Z_{t-1}$. In contrast, when the horizon used in the evaluation step increases ($\tilde H=5$), the system dynamics play a larger role in accounting for differences in the IRFs. For the multivariate  threshold and the absolute value DGPs, a key takeaway is that the smaller the number of observations, the more likely the procedure is to select a small number of clusters. This suggests that the iterative algorithm is conservative when the data is limited and avoids splitting the sample into clusters with few observations.

\section{The Uncertainty Channel of Monetary Policy Transmission to Treasury Yields}\label{sec:application}
A central question in macroeconomics is whether \textit{monetary policy uncertainty} (i.e., uncertainty around the central bank's course of action) affects the transmission of monetary policy to financial assets. Several empirical studies have found evidence of an uncertainty channel whereby heightened uncertainty dampens the response of U.S. Treasury yields to monetary policy shocks (see \citet{Tillman2020}, \citet{Bauer2021}, and \citet{DePooter2021}). Specifically, using an event study around FOMC dates, \citet{DePooter2021} show that the pass-through of monetary policy to medium- and long-term yields is stronger when monetary policy uncertainty is low, with the effect operating mainly through the term premium. They attribute this effect to investors taking larger positions when uncertainty is low; therefore, when a shock occurs, investors are forced to adjust more abruptly, pushing the term premium and yields up by a greater amount. Similar results are found by \citet{Bauer2021} who link the muted response to a signal extraction problem. That is, under high uncertainty, investors attribute less weight to signals from the Fed and, consequently, monetary policy is less effective. Using state-dependent local projections, \citet{Tillman2020} finds that monetary pollicy transmission to medium- and long-term yields is weaker under high uncertainty. Flight to safety, in the form of heightened investor appetite for the safety of locking in long-term bonds over repeatedly rolling over short-term debt, drives down the compensation required holding longer maturities. 

Less is known about how \emph{macroeconomic uncertainty} (i.e., uncertainty about the macroeconomic environment at a given point in time) affects the effectiveness of monetary policy. While increases in \textit{macroeconomic uncertainty} may be linked to increased \textit{monetary policy uncertainty}, other factors such as geopolitical tensions, fiscal policy uncertainty, and real economic shocks can lead to greater macroeconomic uncertainty. Hence, measures of monetary policy and macroeconomic uncertainty do not necessarily move in a synchronous manner. For instance, following the onset of the Global Financial Crisis, the Fed signaled its commitment to expansionary policy, and monetary policy uncertainty declined even as uncertainty about the macroeconomic outlook remained elevated. In contrast, monetary policy uncertainty remained high during the post-COVID inflation surge, while uncertainty about the macroeconomic environment dropped.

We employ our clustered LP method to investigate whether the existing literature on the uncertainty channel of monetary policy transmission to U.S. treasury yields overlooked potential interactions between these different types of uncertainty, which could give rise to time-varying responses. Our approach has several advantages relative to the event studies and state-dependent methods: it does not require the researcher to impose a-priori restrictions about the classification and number of the states; it groups the data taking into account the behavior of both macroeconomic and monetary policy uncertainty, and it does not take a stance on the functional form that drives the time variation but explicitly links this time variation to changes in uncertainty.

\subsection{Data Description and Preliminaries}

We use monthly data for the U.S. that span February 1988 to December 2023. The daily off-the-run, nominal Treasury zero-coupon bond yields (\citet{GURKAYNAK20072291}) are obtained from the Federal Reserve.\footnote{The data are available at https://www.federalreserve.gov/pubs/feds/2006/200628/200628abs.html.} The decomposition of daily yields into a term premium, a future short-term interest rate expectations component, and a residual follow \citet{christensen2011}; the data are obtained from the Federal Reserve Bank of San Francisco.\footnote{The data are available at: https://www.frbsf.org/research-and-insights/data-and-indicators/treasury-yield-premiums/} To construct monthly series, we aggregate the yield, term premium, and expectations component by averaging the monthly observations. 

To measure macroeconomic uncertainty, we use the monthly 3-period ahead macroeconomic uncertainty index by \citet{jurado2015}. This index captures the common variation in \emph{macroeconomic uncertainty} (defined as the conditional volatility of the purely unforecastable component of a series) across many macroeconomic variables, such as real output and income, employment and hours, real retail, manufacturing, and trade sales, and consumer spending. We employ \citet{hrs_mpu} \textit{monetary policy uncertainty} index, a monthly textual-based index based on articles published in the New York Times, Wall Street Journal, and Washington Post. We standardize both uncertainty indices to facilitate comparison. Data for the CPI inflation (constructed from monthly differences in the CPI series, in percent), and the unemployment rate are obtained from FRED.

The monetary policy shock is identified using a high-frequency approach based on \citet{bauer2023}, who use changes in federal funds futures prices in a narrow window around FOMC announcements. Even though we focus on the response of the 5- and 10-year Treasury yields, the Online Appendix C reports results for the 1-year and 2-year yields. We also present results for alternative identification approaches; namely the daily change of 2-year Treasury yields on FOMC days as in \citet{Tillman2020} and the \citet{bauer2023} FOMC frequency series, both aggregated to a monthly series following the weighting scheme in \citet{kilian2024}.

To gather some preliminary insights regarding the connection between \textit{macroeconomic uncertainty}, \textit{monetary policy uncertainty} --hereafter \textit{MPU} and \textit{MacroUncer}, respectively-- and the effect of monetary policy shocks on the 5-year yield, we estimate the state-dependent LP:
\begin{equation*}
y_{t+h} = I_{t-1}\left[\alpha^h_{(1)}+ \beta^h_{(1)}\varepsilon_t+ \boldsymbol{\pi}^{h}_{(1)}\mathbf{w}_t\right] + (1 - I_{t-1})\left[\alpha^h_{(0)}+ \beta^h_{(0)}\varepsilon_t + \boldsymbol{\pi}^{h}_{(0)} \mathbf{w}_t\right]
+ u_{t+h},
\label{eq:sd_lp}
\end{equation*}

where $y_{t+h}$ denotes the $5$-year Treasury yield at horizon $h$, $\varepsilon_t$ is the monetary policy shock, $\mathbf{w}_t$ is a vector of control variables, $I_{t-1}$ is a binary state indicator that takes on the value of 1 when $z_{t-1}> 0$ ($z_{t-1}$ equal $MPU_{t-1}$ or $MacroUncer_{t-1}$) and zero otherwise. $\mathbf{w}_t$ includes the first lags of CPI inflation, unemployment, and the 5-year yield. To investigate possible time-variation in the responses, we estimate the model using four expanding subsamples (1988:2-2015:12, 1988:2-2019:12, 1988:2-2021:12, and 1988:2-2023:12) with the last period corresponding to the full sample. For comparison, we report the estimate IRF for a linear LP specification.

\begin{figure}[h!]
  \centering
  \captionsetup{position=top}
  \caption{State-Dependent LP: Monetary Policy Uncertainty}
  \includegraphics[width=0.75\textwidth, trim=0cm 0cm 0cm 0cm, clip]{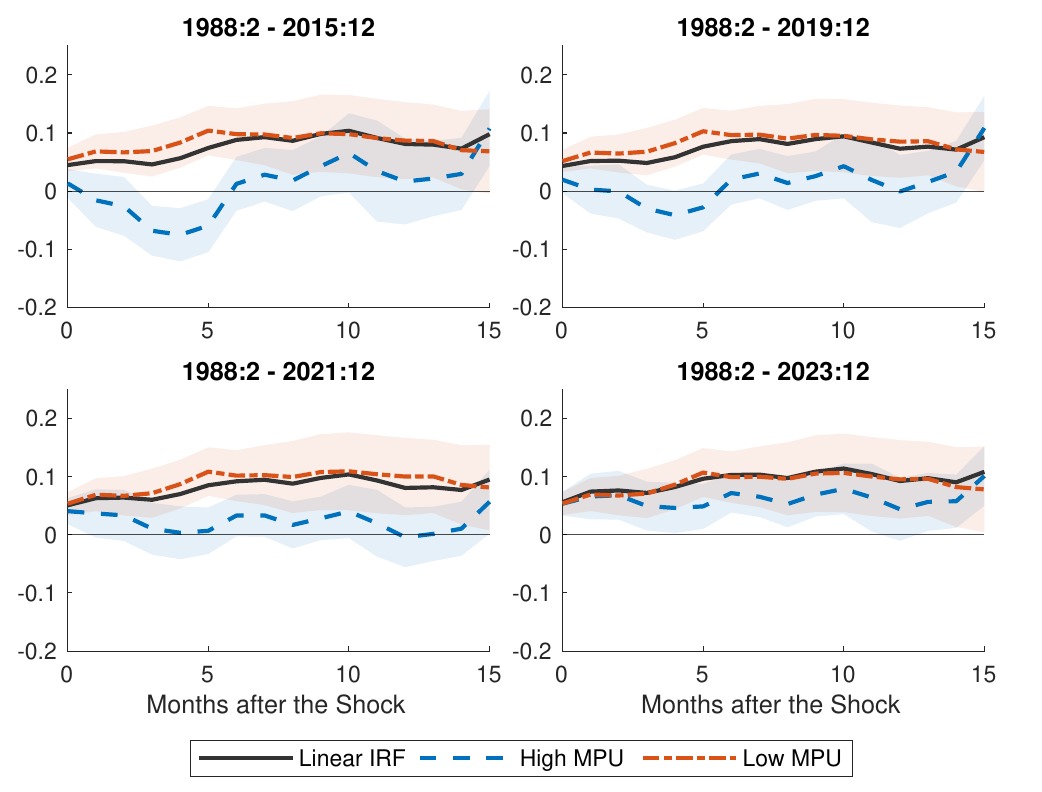}
  \vspace{8pt}
  \begin{minipage}{0.75\textwidth}
    \flushleft\footnotesize{\textit{Notes:} State-dependent Impulse Response Functions of the 5-year yield to a one-standard deviation monetary policy shock under different monetary policy uncertainty regimes, above (blue line) and below (red line) the mean, with 68\% confidence intervals. The impulse response function of the linear model (in black) is plotted for comparison. Each panel represents a different sample.}
  \end{minipage}
  \label{fig:state_dep_MPU}
\end{figure}

\begin{figure}[h!]
  \centering
  \captionsetup{position=top}
  \caption{State-Dependent LP: Macroeconomic Uncertainty}
  \includegraphics[width=0.75\textwidth, trim=0cm 0cm 0cm 0cm, clip]{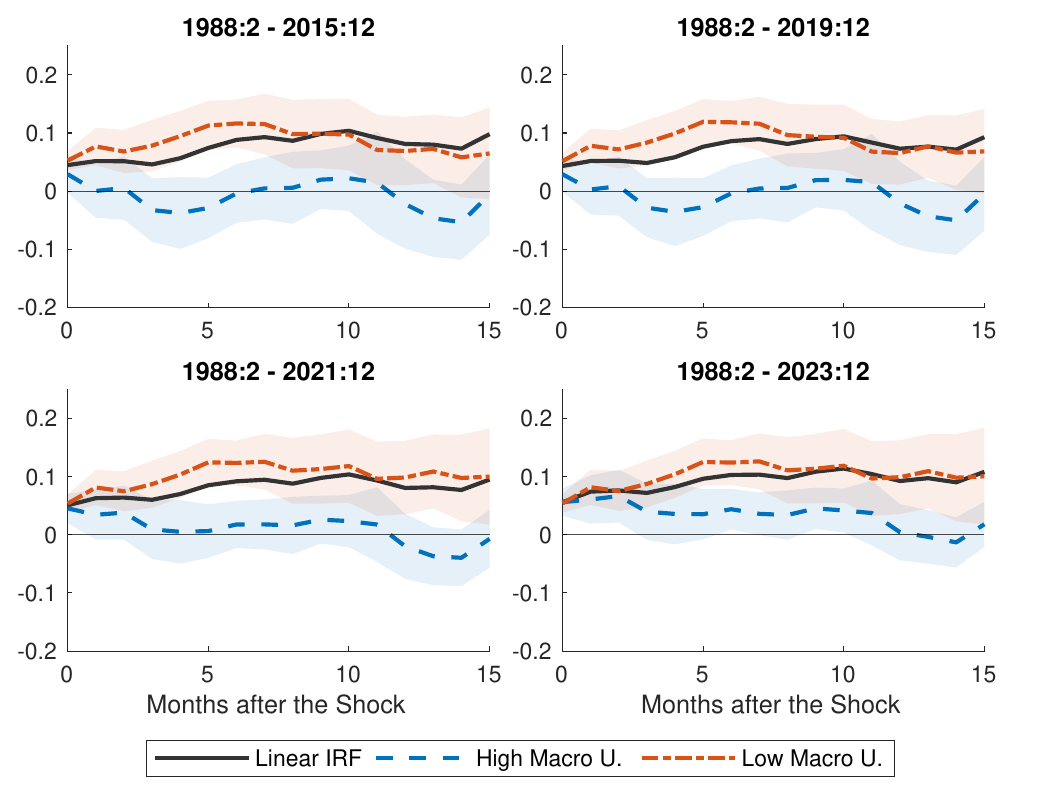}
  \vspace{8pt}
  \begin{minipage}{0.75\textwidth}
    \flushleft\footnotesize{\textit{Notes:} State-dependent Impulse Response Functions of the 5-year yield to a one-standard deviation monetary policy shock under different macroeconomic uncertainty regimes, above (blue line) and below (red line) the mean, with 68\% confidence intervals. The impulse response function of the linear model (in black) is plotted for comparison. Each panel represents a different sample.}
  \end{minipage}
  \label{fig:state_dep_MU}
\end{figure}

Figure \ref{fig:state_dep_MPU} shows the response to a one standard deviation monetary policy shock (5.5 basis points) during the high (dashed blue line) and low (dot-dashed red line) \textit{MPU} states, as well as the response estimated with the linear LP (black solid line). Shaded areas represent 68\% confidence intervals.\footnote{Confidence intervals are computed using HAC standard errors.} The difference between the response in high- and low-uncertainty states appears to be time-varying. Whereas estimates for the earlier samples reveal a clear difference between the IRFs in both states (negative during high MPU and positive during low MPU), this difference is muted in the full sample. This suggests a change in the interaction between MPU and the transmission of monetary policy since (or during) the Covid-19 pandemic. Moreover, the linear LP overestimates the effectiveness of monetary policy during high MPU times.

The estimates reported in Figure \ref{fig:state_dep_MU} are indicative of time-variation in the interaction between \textit{MacroUncer} and the transmission of monetary policy shocks to the 5-year yield. Note how the gap between the IRFs for low- and high-uncertainty states decreases as we extend the sample. For the majority of the subsamples, the response in times of high \textit{MacroUncer} is statistically insignificant or small at most horizons, while positive when \textit{MacroUncer} is low. Once the Covid-19 period is included (1988:2-2021:12 and 1988:2-2023:12), the response on impact and at short horizons turns positive. \\ To summarize, state-dependent LP estimates suggest that the uncertainty channel of monetary policy transmission is time-varying and multifaceted, with high monetary policy and macroeconomic uncertainty attenuating the effectiveness of monetary policy. 
\subsection{How effective is monetary policy in uncertain times?}

We investigate the role of both forms of uncertainty of the yields response to monetary policy via the clustered local projection model described in (\ref{eq:CLP}), where $Z_{t-1}$ includes the \textit{MPU} and the \textit{MacroUncer} indices, and the controls comprise lagged CPI inflation, lagged unemployment and the first lag of the dependent variable. The iterative procedure starts with 10 clusters and, testing the null of pairwise equality for $h=0,1, ...,15$ in the third step of the algorithm, yields a total of $\hat{K}=4$ clusters.

\begin{table}[h!]
  \centering
  \caption{Estimated cluster means and their labeling}
  \label{tab:clusters}
  \vspace{0.25em}
  \begin{tabular}{lcc}
    \toprule
    \textbf{Cluster} & \textbf{Mean Macro U.} & \textbf{Mean MPU} \\
    \midrule 
    \color{MatlabBlue} 1: Low macro uncertainty, low MPU     & -0.27 & -0.67 \\
    \color{MatlabPurple} 2: Low macro uncertainty, moderate MPU& -0.23 & 0.32 \\
    \color{MatlabYellow} 3: Moderate macro uncertainty, \;high MPU & 0.63 & 2.29 \\
    \color{MatlabRed} 4: High macro uncertainty, low MPU & 2.73 & -0.34 \\
    \bottomrule
  \end{tabular}
\end{table}

Table \ref{tab:clusters} summarizes the means for each of the four clusters and their labeling. Recall that we standardize both indices, thus positive values represent uncertainty above the historical mean (hereafter high uncertainty) and negative values represent uncertainty below the mean (hereafter low uncertainty). Note how the different groups have distinctive characteristics. For instance, cluster one corresponds to macro and monetary policy uncertainty below the mean whereas cluster three corresponds to moderate macro and high monetary policy uncertainty. 

\begin{figure}[h!]
  \caption{Classification of Macroeconomic and Monetary Policy Uncertainty Indices}
  \label{fig:unc_scatters}
  \centering
  \vspace{-8pt}
  \includegraphics[width=0.48\linewidth]{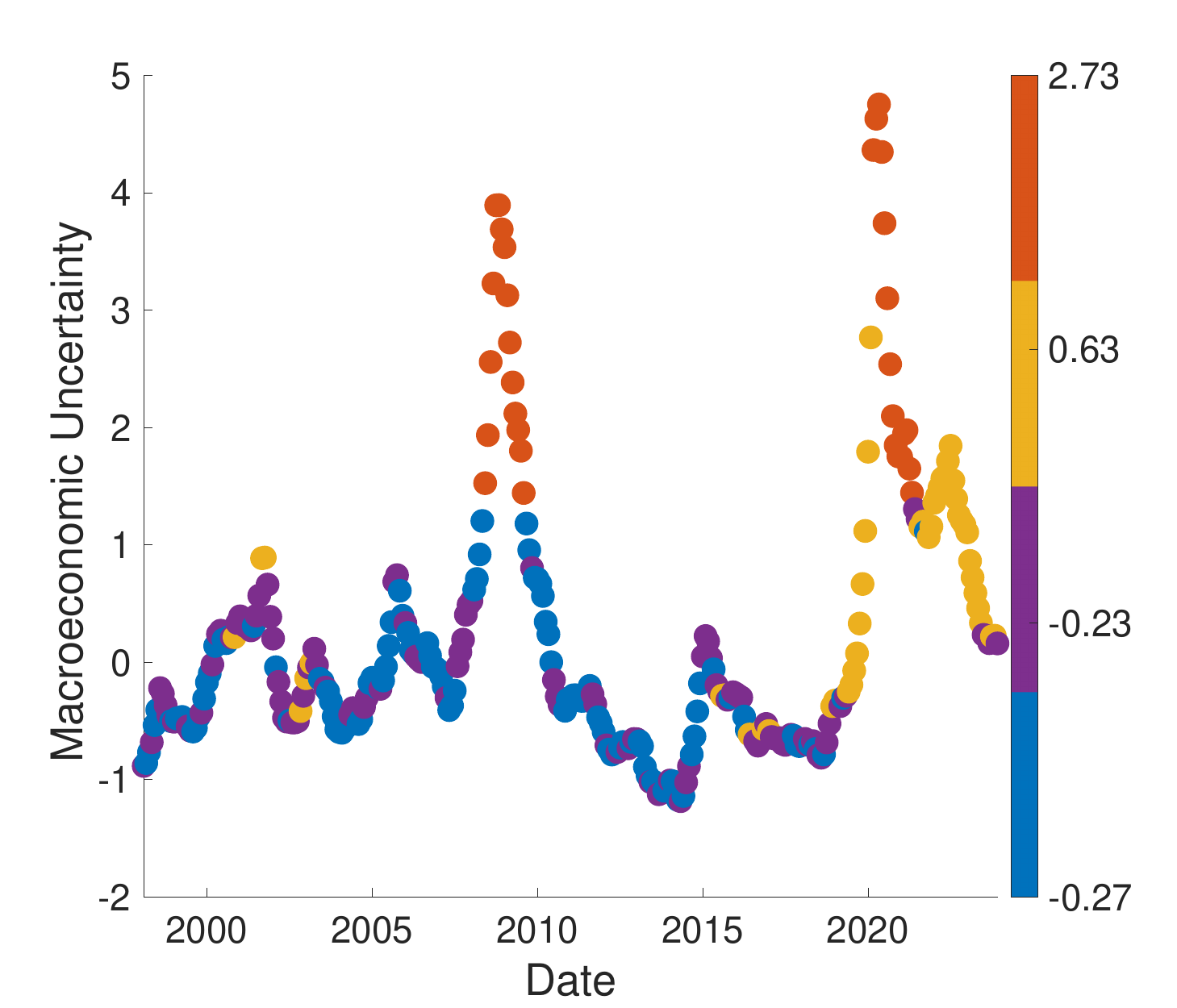}
  \includegraphics[width=0.48\linewidth]{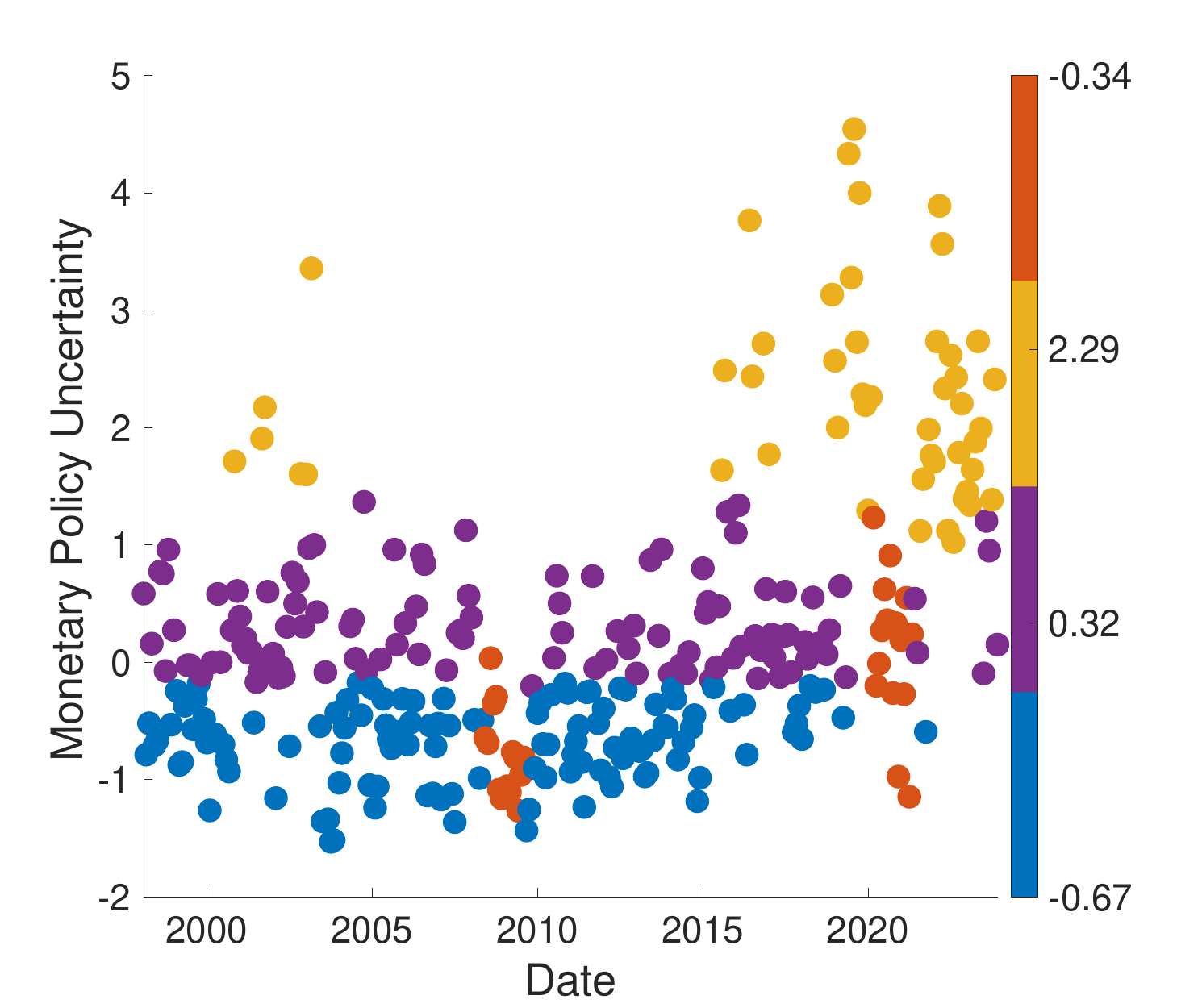}
  \vspace{-8pt}
    \begin{minipage}{\textwidth}
        \flushleft\footnotesize{\textit{Notes:} This figure reports the classification of the macroeconomic (left panel) and monetary policy uncertainty (right panel) indices into the $K=4$ clusters.}
    \end{minipage}
\end{figure}

To visualize how the historical data are partitioned by the iterative procedure, the scatterplots in Figure \ref{fig:unc_scatters} map the uncertainty data series to the four clusters. As the figure illustrates, some periods of high \textit{MacroUncer} (e.g., the Great Recession and the Covid-19 pandemic) were also periods of low \textit{MPU} (cluster 4 in red). In contrast, the months preceding and following the pandemic correspond to high \textit{MPU} and moderate \textit{MacroUncer} (cluster 3 in yellow). This cluster includes the period when banks took initial write-downs (October 2007) and the Fed's announcement of QE2 (November 2010). Furthermore, note that clustering based on only one of the two uncertainty indices--as effectively done in the state-dependent models--would overlook periods where monetary policy and macroeconomic uncertainty do not move in a synchronous manner.

We now turn to the cluster LP estimates. Figure \ref{fig:appl_clustered_LP} depicts the IRFs of the 5- and 10-year U.S. Treasury nominal yields and their components to a one–standard deviation contractionary monetary policy shock for each of the four clusters. For comparison, we also plot the response estimated by a linear LP (solid black line). Our results reveal an interesting pattern regarding the relative importance of each type of uncertainty for the transmission of monetary policy shocks. On impact and in the very short-run, \textit{MacroUncer} is the main driver of heterogeneity in the response of the yields. In fact, the impact responses during high \textit{MacroUncer} (clusters 3 and 4 in solid yellow line, and dash-dotted red line, respectively) are positive and statistically significant, whereas the responses in times of low \textit{MacroUncer} (clusters 1 and 2 in dotted blue line, and dashed purple line, respectively) are not. However, at longer horizons, \textit{MPU} drives the responsiveness of yields. Note how the response of the yield is an order of magnitude higher twelve months after the shock when \textit{MPU} is low (clusters 1 and 4) than when it is high (clusters 2 and 3). Of particular interest is the linear LP estimate, which corresponds to the average IRF over the sample and underestimates (overestimates) the response of the yields during low \textit{MPU} and overestimates the response during low \textit{MacroUncer} and moderate \textit{MPU}.

\begin{figure}[h!]
  \centering
  \caption{The Effect of Contractionary Monetary Policy Shocks Across Uncertainty Regimes}
  \includegraphics[width=0.75\textwidth, trim=0cm 0cm 0cm 0cm, clip]{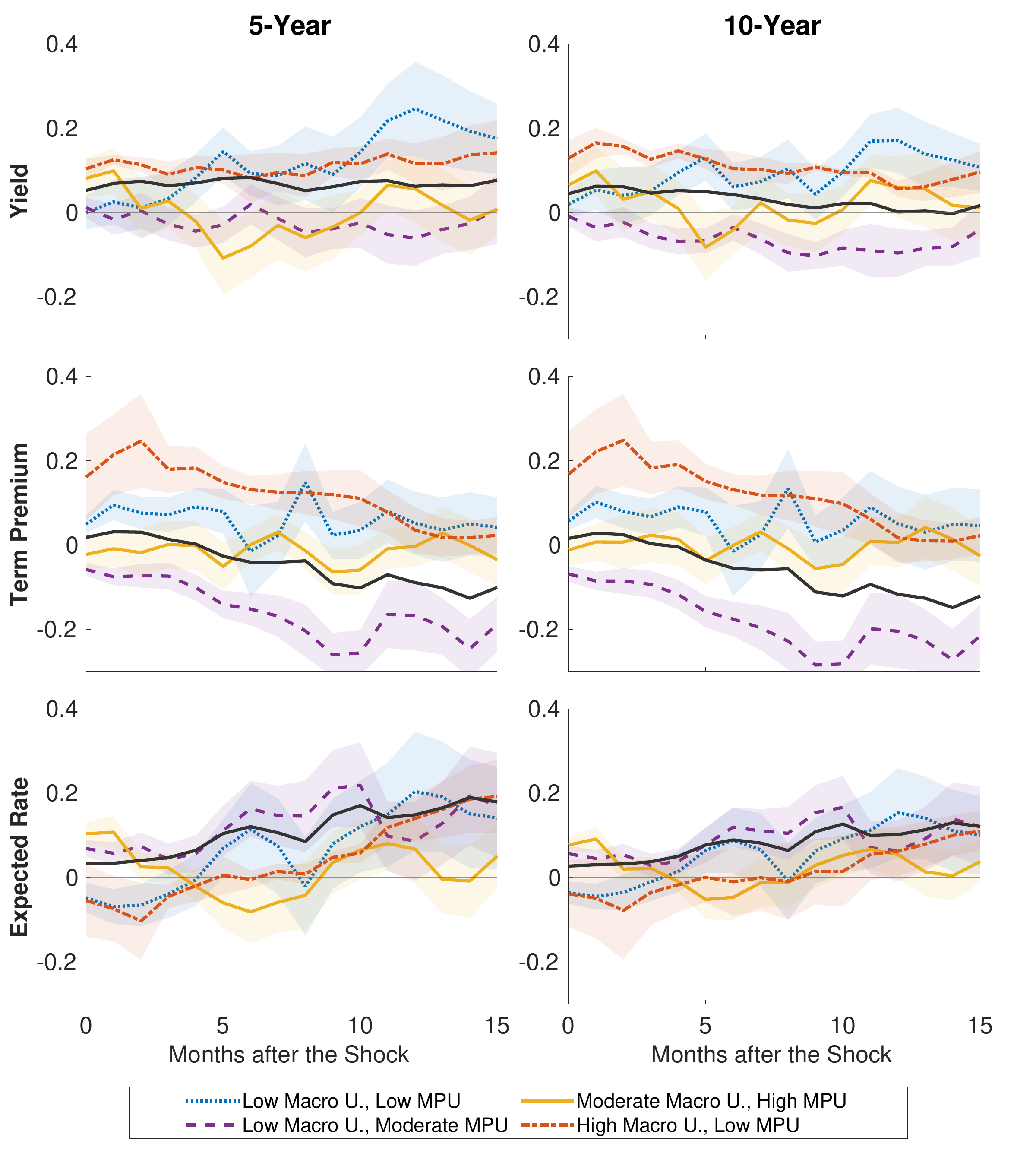}
  \vspace{-8pt}
  \begin{minipage}{0.75\textwidth}
    \flushleft\footnotesize{\textit{Notes:} The figure reports clustered Local Projections Impulse Response Functions of the 5- and 10-year yield and its components to a one-standard deviation monetary policy shock under different monetary policy and macroeconomic uncertainty regimes. Shaded areas represent 68\% confidence intervals. The estimated linear LP response is plotted in black.}
  \end{minipage}
  \label{fig:appl_clustered_LP}
\end{figure}

To dig deeper into the mechanisms that drive this heterogeneity, we estimate the responses of the term premium and the expected rate component. On impact, the effect of a contractionary monetary policy shock on the term premiums is heterogeneous across clusters, but the average responses of the expectations component are similar for clusters with high (low) \textit{MPU}. To gather some insight regarding the role of heightened uncertainty in the transmission of monetary policy, let us focus on the states where either of the indices is high. Comparing the average responses of the yields and the expectations components during times of high \textit{MPU} and moderate \textit{MacroUncer} (solid yellow line) reveals that short-term dynamics are driven by the expectations component. In contrast, the positive response of the term premium explains the short-run increase experienced by the yields during times of high \textit{MacroUncer} (dash-dotted red line); indeed, the increase in the term premium more than offsets the decline in the expected rate on impact and on the first three months following the shock. Moreover, the linear LP underestimates (overestimates) the response of the term premium (expectations during times of low \textit{MPU}.

Our estimates suggest that the mechanisms through which macroeconomic and monetary policy uncertainty affects each component of the yield differ. The behavior of the term premium is consistent with the view that macroeconomic uncertainty amplifies the risk compensation required by investors following a contractionary shock. In times of high macroeconomic uncertainty, the distribution of possible future states is more widespread: growth, inflation, and economic policy could swing in multiple directions, making the future path of short-term rates inherently harder to predict. Even a surprise rate hike may increase market disagreement---some investors may fear an overshoot into recession, while others may interpret it as a signal of stronger inflation risks---further increasing perceived risk around the future rate path and pushing the term premium upward. This explains why clusters 3 (4), characterized by moderate (high) \textit{MacroUncer}, display the largest term premium responses. Nevertheless, in times of similar macroeconomic uncertainty, \textit{MPU} is associated with a dampening force on the term premium, consistent with the flight-to-safety channel highlighted by \citet{Tillman2020}: when the future course of policy is unclear, long-term securities become relatively more attractive as safe havens, compressing the additional compensation demanded by investors. This is reflected in the smaller term premium response of cluster 3 relative to cluster 4, and of cluster 2 relative to cluster 1.

Taken together, these results suggest that macroeconomic and monetary policy uncertainty operate through complementary but distinct channels: the former primarily amplifies the risk compensation embedded in the term premium, while the latter governs the speed and persistence with which markets revise their expectations about the future rate path following a monetary policy shock. Overall, this analysis demonstrates the potential of our IRF estimation approach to capture the nonlinearity in the interaction between macroeconomic and monetary policy uncertainty, offering insights that would be difficult to obtain from standard two-state models conditioning on a single uncertainty measure or a linear LP. 

\section{Conclusion}\label{sec:conclusions}
This paper proposed a new method to estimate impulse response functions in time-varying parameter models, which we term clustered local projections (LP). We showed that the clustered LP recover the conditional average response $CAR$ if the set of variables that drive the time-variation are exogenous and the conditional marginal response $CMR$ when they are endogenous. Thus, the impulse response estimands from the clustered LP provide a causal summary of the nonlinear effect in each cluster. 

From a methodological perspective, we proposed an iterative estimation approach that comprises three steps: classification using the k-means algorithm, joint estimation of impulse responses for all clusters and horizons via local projections, and evaluation. Monte Carlo simulations demonstrated that the clustered LP estimator recovers accurately $CAR$ in smooth threshold models, as well as in specifications in which the parameters depend on the driving variable in an asymmetric fashion. They also illustrated the performance of the iterative algorithm in smaller samples and when alternative horizons are employed in the evaluation step.

We used clustered LP to study how effective contractionary monetary policy is in shifting the 5- and 10-year US treasury nominal yields during uncertain times. Our estimates provided evidence that the mechanisms whereby macroeconomic and monetary policy uncertainty affect the U.S. Treasury yields differ. On the one hand, during times of high macroeconomic uncertainty, contractionary monetary policy is associated with a significant increase in the term premium, which, in turn, accounts for the increase in the yields. On the other hand, during times of high monetary policy uncertainty, contractionary policy leads to an increase in expected rates on impact and shortly after the shock, but to a decline at longer horizons. This response, in conjunction with a moderate increase in the term premium, accounts for the short-lived response of the yields.

\setstretch{1.15}

\bibliography{literature}
\bibliographystyle{jpe}

\newpage
\appendix
\section{Appendix}\label{sec:appendix}
\textbf{Proof of Proposition \ref{prop:CLP} }\label{app:proof}

\textbf{Part (i):} Because the clustered LP \eqref{eq:CLP} is fully interacted and by the FWL theorem together with Assumption~\ref{as:1}, the OLS estimand
$\beta_k^h$ equals the within-cluster regression coefficient:
\begin{equation}\label{eq:estimand_raw}
  \beta_k^h \;=\;
  \frac{\E[y_{t+h}\,\varepsilon_{t}\mid D_k=1]}
       {\E[\varepsilon_{t}^2\mid D_k=1]}.
\end{equation}
This is the population OLS coefficient from regressing $y_{t+h}$ on
$\varepsilon_t$ using only observations in the subsample
$\{t : D_k = 1\}$. In addition, since $D_k = \mathbf{1}\{Z_{t-1}\in\Ck\}$ is a deterministic function of
$Z_{t-1}$ and $\varepsilon_t$ is independent of $\mathcal{F}_{t-1}$
(Assumption~\ref{as:1}(a)), we have
$\varepsilon_t\indep Z_{t-1}$ and therefore
$\varepsilon_t\indep D_k$ for all $k$. The assumption \ref{as:1} and the timing of the clustering also ensure that a conditional independence assumption is satisfied so that  $ \varepsilon_t \;\indep\; U_{h,t+h} \;\mid\; D_k=1$. 

Using the structural representation \eqref{eq:structural} and the
conditional independence:
\begin{align}
  g_{h,k}(e)
  &\;\equiv\; \E[y_{t+h}\mid\varepsilon_t=e,\, D_k=1] \notag\\
  &\;=\; \E[\psi_h(e,U_{h,t+h})\mid\varepsilon_t=e,\, D_k=1] \notag\\
  &\;=\; \E[\psi_h(e,U_{h,t+h})\mid D_k=1]
  \;=\; \Psi_k^h(e). \label{eq:identification}
\end{align}
The second equality substitutes \eqref{eq:structural}. The third equality
uses $\varepsilon_t\indep U_{h,t+h}\mid D_k=1$: since $U_{h,t+h}$ is
independent of $\varepsilon_t$ given $D_k=1$, conditioning on
$\varepsilon_t=e$ does not alter the distribution of $U_{h,t+h}$. Thus,
the within-cluster regression function $g_{h,k}$ identifies the
conditional average structural function $\Psi_k^h$. The proof of Proposition \ref{prop:CLP} follows directly from Proposition 1 of \citet{kolesar2025dynamiccausaleffectsnonlinear} given our Assumption \ref{as:1}. In particular,
\begin{equation}\label{eq:KP_applied}
  \beta_k^h
  \;=\; \frac{\E[y_{t+h}\,\varepsilon_t\mid D_k=1]}
             {\E[\varepsilon_t^2\mid D_k=1]}
  \;=\; \int_I \omega_k(e)\, g_{h,k}'(e)\,de,
\end{equation}
where
\begin{equation}\label{eq:weight_proof}
  \omega_k(e) \;=\;
  \frac{\Cov\!\left(\mathbf{1}\{\varepsilon_t\geq e\},\,\varepsilon_t
        \,\Big|\, D_k=1\right)}
       {\Var(\varepsilon_t\mid D_k=1)}.
\end{equation}
By \citet{kolesar2025dynamiccausaleffectsnonlinear}, Proposition~1, $\omega_k(e)\geq 0$ for all $e$,
$\int_I\omega_k(e)\,de=1$, and $\omega_k$ is hump-shaped, peaking near
$\E[\varepsilon_t\mid D_k=1]$. Furthermore, $\omega_k$ depends only on
the distribution of $\varepsilon_t\mid D_k=1$, not on $y_{t+h}$ or~$h$. 

Substituting the identification result \eqref{eq:identification} into
\eqref{eq:KP_applied}:
\begin{equation}
  \beta_k^h
  \;=\; \int_I \omega_k(e)\,\Psi_k^{h\prime}(e)\,de,
\end{equation}
establishing \eqref{eq:part1}.

\textbf{Part (ii):} Part~(ii) follows from Part~(i) by showing that the conditional average
structural function $\Psi_k^h(e)$ is linear in~$e$ when $Z_t$ is exogenous. When $Z_t\indep\varepsilon_t$ for all $t$, the future path
$\{Z_t, Z_{t+1}, \ldots, Z_{t+h-1}\}$ does not depend on the realization
$\varepsilon_t = e$. Hence all time-varying coefficients
$\{\beta_t^h(Z_{t-1}), \gamma_t^h(Z_{t-1})\}_{h=0}^H$ are
independent of $e$. This is the direct analogue of \citet{Goncalves2024a}, where linearity of the potential outcome in $e$
follows from the absence of $e$-dependence in the future states. It
follows that $\psi_h(e, U_{h,t+h})$ is linear in $e$, e.g. $\psi_h\left(e, U_{h, t+h}\right)=A_h\left(U_{h, t+h}\right)+\Lambda_h\left(U_{h, t+h}\right) e$., where $A_h\left(U_{h, t+h}\right)$ and $\Lambda_h\left(U_{h, t+h}\right)$ are a random variables that depend on $U_{h,t+h}$ but
not on $e$ or $\delta$. Specifically, $\Lambda_h\left(U_{h, t+h}\right)$ aggregates the
cumulative effect of the shock through the chain of time-varying
coefficients along the path from $t$ to $t+h$. Then 
\begin{equation}\label{eq:linearity}
  y_{t+h}(e+\delta) - y_{t+h}(e) \;=\; \Lambda_h\left(U_{h, t+h}\right)\,\delta,
\end{equation}
In addition, by linearity, the conditional average structural
function is
\begin{equation}
  \Psi_k^h(e) \;=\; \E[\psi_h(e,U_{h,t+h})\mid D_k=1]
               \;=\; \E[\Lambda_h\left(U_{h, t+h}\right)\mid D_k=1]\cdot e,
\end{equation}
so its derivative $\Psi_k^{h\prime}(e) = \E[\Lambda_h\left(U_{h, t+h}\right)\mid D_k=1]$ is
\emph{constant} in~$e$. Substituting into \eqref{eq:part1} from
Part~(i):
\begin{equation}
  \beta_k^h
  \;=\; \int_I \omega_k(e)\,\Psi_k^{h\prime}(e)\,de
  \;=\; \E[\Lambda_h\left(U_{h, t+h}\right)\mid D_k=1]\cdot
        \underbrace{\int_I \omega_k(e)\,de}_{=\,1}
  \;=\; \E[\Lambda_h\left(U_{h, t+h}\right)\mid Z_{t-1}\in\Ck].
\end{equation}
Taking expectations in \eqref{eq:linearity} conditional on
$Z_{t-1}\in\Ck$:
\begin{equation}
  \CAR^h(\delta,k)
  \;=\; \E[\Lambda_h\left(U_{h, t+h}\right)\mid Z_{t-1}\in\Ck]\cdot\delta.
\end{equation}
Therefore
\begin{equation}
  \beta_k^h
  \;=\; \E[\Lambda_h\left(U_{h, t+h}\right)\mid Z_{t-1}\in\Ck]
  \;=\; \frac{\CAR^h(\delta,k)}{\delta}
  \;=\; \CAR^h(1,k)
  \;=\; \CMR^h(k)
\end{equation}
establishing \eqref{eq:part2}.

\end{document}